\definecolor{DarkRed}{rgb}{0.5,0.1,0.1}
\definecolor{DarkBlue}{rgb}{0.1,0.1,0.5}
\definecolor{ForestGreen}{rgb}{0.1333,0.5451,0.1333}
\definecolor{Red}{rgb}{0.9,0,0}
\def\BState{\State\hskip-\ALG@thistlm}
\newtheorem{theorem}{Theorem}
\newtheorem{lemma}{Lemma}[section]
\newtheorem{claim}[lemma]{Claim}
\newtheorem{definition}[lemma]{Definition}
\newtheorem{problem}{Problem}
\newtheorem{remark}[lemma]{Remark}
\newtheorem*{claim*}{Claim}
\newtheorem*{proposition*}{Proposition}
\newtheorem*{lemma*}{Lemma}
\newtheorem*{problem*}{Problem}
\newtheorem{mdresult}{Result}
\newenvironment{result}{\begin{mdframed}[backgroundcolor=lightgray!40,topline=false,rightline=false,leftline=false,bottomline=false,innertopmargin=2pt]\begin{mdresult}}{\end{mdresult}\end{mdframed}}
\newtheorem{mdinvariant}{Invariant}
\renewcommand{\qed}{\nobreak \ifvmode \relax \else
      \ifdim\lastskip<1.5em \hskip-\lastskip
      \hskip1.5em plus0em minus0.5em \fi \nobreak
      \vrule height0.75em width0.5em depth0.25em\fi}
\newcommand{\toShrink}{-.20cm}
\newcommand{\toShrinkEnu}{-.2cm}
\newcommand{\Ot}{\ensuremath{\widetilde{O}}}
\newcommand{\eps}{\ensuremath{\varepsilon}}
\newcommand{\Paren}[1]{\Big(#1\Big)}
\newcommand{\bracket}[1]{\left[#1\right]}
\newcommand{\paren}[1]{\ensuremath{\left(#1\right)}\xspace}
\newcommand{\card}[1]{\left\vert{#1}\right\vert}
\newcommand{\IR}{\ensuremath{\mathbb{R}}}
\newcommand{\IN}{\ensuremath{\mathbb{N}}}
\newcommand{\expect}[1]{\Exp\bracket{#1}}
\newcommand{\set}[1]{\ensuremath{\left\{ #1 \right\}}}
\newcommand{\poly}{\mbox{\rm poly}}
\newcommand{\polylog}{\mbox{\rm  polylog}}
\newcommand{\alg}{\ensuremath{\mathcal{A}}\xspace}
\DeclareMathOperator*{\Exp}{\ensuremath{{\mathbb{E}}}}
\DeclareMathOperator*{\Prob}{\ensuremath{\textbf{Pr}}}
\renewcommand{\Pr}{\Prob}
\newcommand{\Ex}{\Exp}
\newcommand{\etal}{{ et al.\,}}
\newcommand{\FG}{\ensuremath{\mathcal{G}}}
\newenvironment{tbox}{\begin{tcolorbox}[
		enlarge top by=5pt,
		enlarge bottom by=5pt,
		 breakable,
		 boxsep=0pt,
                  left=4pt,
                  right=4pt,
                  top=10pt,
                  arc=0pt,
                  boxrule=1pt,toprule=1pt,
                  colback=white
                  ]%%
	}
{\end{tcolorbox}}
\newcommand{\bM}{\bm{M}}
\title{When Algorithms for Maximal Independent Set and \\ Maximal Matching Run in Sublinear-Time\footnote{A preliminary version of this paper appeared in the proceedings of ICALP'19.}}
\author{Sepehr Assadi\thanks{Department of Computer Science, Rutgers University. Email: {{\small {\tt sepehr.assadi@rutgers.edu}}}}
\and Shay Solomon\thanks{School of Electrical Engineering, Tel Aviv University. Email: {\small \texttt{shayso@post.tau.ac.il.}}}}
\date{}
\begin{document}
\maketitle

%\thispagestyle{empty}
%\input{abstractShay}
%\setcounter{page}{0}
%\clearpage

\thispagestyle{empty}
\begin{abstract}

Maximal independent set (MIS), maximal matching (MM), and $(\Delta+1)$-coloring in graphs of maximum degree $\Delta$ are among the most prominent algorithmic graph theory problems. They are all solvable by a simple linear-time greedy algorithm and up until very recently this constituted the state-of-the-art. In SODA 2019, Assadi, Chen, and Khanna gave a randomized algorithm for $(\Delta+1)$-coloring
that runs in $\Ot(n\sqrt{n})$ time\footnote{Here, and throughout the paper, we define $\Ot(f(n)) := O(f(n) \cdot \polylog{(n)})$ to suppress log-factors.}, which even for moderately dense graphs is sublinear in the input size.
The work of Assadi et al. however contained a spoiler for MIS and MM: neither problems provably admits a sublinear-time algorithm in general graphs.
In this work, we dig deeper into the possibility of achieving sublinear-time algorithms for MIS and MM.

\smallskip

	The neighborhood independence number of a graph $G$, denoted by $\beta(G)$, is the size of the largest independent set in the neighborhood of any vertex.
	We identify $\beta(G)$ as the ``right'' parameter to measure the runtime of MIS and MM algorithms:
	Although graphs of bounded neighborhood independence may be very dense (clique is one example), we prove that carefully chosen
	variants of greedy algorithms for MIS and MM run in $O(n\beta(G))$ and $O(n\log{n}\cdot\beta(G))$ time respectively on any $n$-vertex graph $G$.
	We complement this positive result by observing that a simple extension of the lower bound of Assadi\etal implies that $\Omega(n\beta(G))$ time is also necessary for any algorithm to either problem for all values of $\beta(G)$  from $1$ to $\Theta(n)$.
	We note that our algorithm for MIS is deterministic while for MM we use randomization which we prove is unavoidable: any deterministic algorithm for MM requires $\Omega(n^2)$ time even for $\beta(G) = 2$.
	
	\smallskip
	
	Graphs with bounded neighborhood independence, already for constant $\beta = \beta(G)$, constitute a rich family of possibly dense graphs, including line graphs, proper interval graphs, unit-disk graphs, claw-free graphs, and graphs of bounded growth.
	Our results  suggest that even though MIS and MM do not admit sublinear-time algorithms in general graphs,
	one can still solve both problems in sublinear time for a wide range of $\beta(G) \ll n$.

	\smallskip
		
	Finally, by observing that the lower bound of $\Omega(n\sqrt{n})$ time for $(\Delta+1)$-coloring due to Assadi\etal applies to graphs of (small) constant neighborhood independence, we unveil an intriguing  separation between the time complexity of MIS and MM, and that of $(\Delta+1)$-coloring:  while the time complexity of MIS and MM is strictly higher than that of $(\Delta+1)$ coloring in general graphs, the exact opposite relation holds for graphs with small neighborhood independence.

\end{abstract}

\setcounter{page}{0}
\clearpage

%%\thispagestyle{empty}
%%\setcounter{tocdepth}{3}
%%\tableofcontents
%%\clearpage
%%\setcounter{page}{1}

\renewcommand{\deg}[1]{\ensuremath{\textnormal{\textsf{deg}}(#1)}}
\newcommand{\degto}[2]{\ensuremath{\textnormal{\textsf{deg}}_{#2}(#1)}}

\newcommand{\mis}{\ensuremath{\mathcal{M}}}

\newcommand{\markn}{\ensuremath{\textnormal{\texttt{mark}}}}

\newcommand{\false}{\textnormal{FALSE}}
\newcommand{\true}{\textnormal{TRUE}}

\section{Introduction}\label{sec:intro}

Maximal independent set (MIS) and maximal matching (MM) are two of the most prominent graph problems with a wide range of applications in particular to {symmetry breaking}. Algorithmic study of these problems can be traced back to at least four decades ago
in the pioneering work of~\cite{KarpW84,Luby85,AlonBI86,IsraelI86} on PRAM algorithms. These problems have since been studied extensively in various  models
 including distributed algorithms~\cite{Linial92,HanckowiakKP98,KuhnMW04,LenzenW11,BarenboimEPS12,Ghaffari16,FischerGK17}, dynamic
 algorithms~\cite{NeimanS13,BaswanaGS15,Solomon16,OnakSSW18,AssadiOSS18,AssadiOSS19},
streaming algorithms~\cite{FKMSZ05,HalldorssonHLS16,CormodeDK17,AssadiCK19}, massively parallel computation (MPC) algorithms~\cite{LattanziMSV11,GhaffariGMR18,BehnezhadHH19}, local computation
algorithms (LCA)~\cite{RubinfeldTVX11,AlonRVX12,Ghaffari16,FraigniaudHK16,LeviRY17,GhaffariU19}, and numerous others.

In this paper, we consider the \textbf{time complexity} of MIS and MM (in the centralized setting) and focus on one of the most basic questions regarding these two problems:
\vspace{-2pt}
\begin{quote}
\emph{How fast can we solve maximal independent set and maximal matching problems?}
\end{quote}
\vspace{-2pt}

At first glance, the answer to this question may sound obvious: there are text-book greedy algorithms for both problems that run in linear time and ``of course'' one cannot solve these problems faster as just reading the input takes linear time.
This answer however is not quite warranted: for the closely related problem of $(\Delta+1)$-(vertex) coloring, very recently Assadi, Chen, and Khanna~\cite{AssadiCK19} gave a randomized
algorithm that runs in only $\Ot(n\sqrt{n})$ time on any $n$-vertex graph with high probability\footnote{We say an event happens with high probability if it happens with probability at least $1-1/\poly{(n)}$.}. This means that
even for moderately dense graphs, one can indeed color the graph faster than reading the entire input, i.e., in sublinear time.

The Assadi-Chen-Khanna algorithm hints that one could perhaps hope for sublinear-time algorithms for MIS and MM as well. Unfortunately however,
the work of~\cite{AssadiCK19} already contained a spoiler: neither MIS nor MM provably admits a sublinear-time algorithm in general graphs.

In this work, we show that despite the negative result of~\cite{AssadiCK19} for MIS and MM, the hope for obtaining sublinear-time algorithms for these problems need not  be  short lived. In particular, we
identify a key parameter of the graph, namely the \emph{neighborhood independence number}, that provides a more nuanced measure of runtime for these problems 
and show that both problems can be solved much faster when neighborhood independence is small. This in turn
gives rise to sublinear-time algorithms for MIS and MM on a rich family of graphs with bounded neighborhood independence. In the following, we elaborate more on our results.

\subsection{Our Contributions}\label{sec:results}

For a graph $G(V,E)$, the neighborhood independence number of $G$, denoted by $\beta(G)$, is defined as the size of the largest independent set in the graph in which all vertices of the independent set are incident
on some shared vertex $v \in V$. Our main result  is as follows:

\begin{result}\label{res:main}
	There exist algorithms that given a graph $G(V,E)$  find $(i)$ a maximal independent set of $G$ deterministically in $O(n \cdot \beta(G))$ time, and $(ii)$
	a maximal matching of $G$ randomly in $O(n\log{n} \cdot \beta(G))$ time in expectation and with high probability.
\end{result}

When considering sublinear-time algorithms, specifying the exact data model is important as the algorithm cannot even read the entire input once. We assume that the input graph is presented in the adjacency array representation, i.e., for each vertex $v \in V$, we are given degree $\deg{v}$ of $v$ followed by an array of length $\deg{v}$ containing all neighbors of $v$ in arbitrary order.
This way, we can access the degree of any vertex $v$ or its $i$-th neighbor for $i \in [\deg{v}]$ in $O(1)$ time. We also make the common assumption that a random number from $1$ to $n$ can be generated in $O(1)$ time.
This is a standard input representation for graph problems and is commonly used in the area of sublinear-time algorithms (see, e.g.~\cite{GoelKK09,GoelKK10,OnakRRR12}).
Let us now elaborate on several aspects of Result~\ref{res:main}.

\paragraph{Optimality of Our Bounds.} Assadi\etal~\cite{AssadiCK19} proved that any algorithm
for MIS or MM requires $\Omega(n^2)$ time in general.
These lower bounds can be extended in an easy way to prove that $\Omega(n \cdot \beta)$ time is also necessary for both problems on graphs with neighborhood independence $\beta(G)=\beta$.
Indeed, independently sample $t:=n/\beta$ graphs $G_1,\ldots,G_t$ each on $\beta$ vertices from the hard distribution of graphs in~\cite{AssadiCK19} and let $G$ be the union of these graphs. Clearly, $\beta(G) \leq \beta$ and it follows
that since solving MIS or MM on each graph $G_i$ requires $\Omega(\beta^2)$ time by the lower bound of~\cite{AssadiCK19}, solving $t$ \emph{independent} copies requires $\Omega(t \cdot \beta^2) = \Omega(n\beta)$ time.
As such, our Result~\ref{res:main} is optimal for every $\beta$ ranging from a constant to $\Theta(n)$ (up to a constant factor for MIS and $O(\log{n})$ for MM).

\paragraph{Our Algorithms.} Both our algorithms for MIS and MM in Result~\ref{res:main} are similar to the standard greedy algorithms,
though they require careful adjustments and implementation. Specifically, the algorithm for MIS is the standard deterministic greedy algorithm (with minimal modification)
and for MM we use a careful implementation of the (modified) randomized greedy algorithm (see, e.g.~\cite{DyerF91,AronsonDFS95,MillerP97,PoloczekS12}).
The novelty of our work mainly lies in the analysis of these algorithms.
We show, perhaps surprisingly,  that already-known algorithms can in fact achieve an improved performance and run in sublinear-time for graphs with bounded neighborhood independence \emph{even} when the value of $\beta(G)$ is
unknown to the algorithms. Combined with the optimality of our bounds mentioned earlier, we believe that this makes neighborhood independence number  an ideal parameter for measuring the runtime of MIS and MM algorithms.

\paragraph{Determinism and Randomization.} Our MIS algorithm in Result~\ref{res:main} is  deterministic which  is a rare occurrence in the realm of sublinear-time algorithms.
But for MM, we again fall back on randomization to achieve sublinear-time performance. This is not a coincidence however: we prove in Theorem~\ref{thm:mm-lower} that any deterministic algorithm for MM
requires $\Omega(n^2)$ time {even} on graphs with constant neighborhood independence number. This also suggests a separation in the time complexity of MIS and MM for deterministic algorithms.

\paragraph{Bounded Neighborhood Independence.}  Our Result~\ref{res:main} is particularly interesting for graphs with constant neighborhood independence as we obtain quite fast algorithms
with running time $O(n)$ and $O(n\log{n})$ for MIS and MM, respectively. Graphs with constant neighborhood independence  capture a rich family of graphs;
several illustrative examples are as follows:

\begin{itemize}
	\item \emph{Line graphs:} For any arbitrary graph $G$, the neighborhood independence number of its line graph $L(G)$ is at most $2$. More generally, for any $r$-hyper graph $\mathcal{H}$ in which each hyper-edge connects at most $r$ vertices,
	$\beta(L(\mathcal{H})) \leq r$.
		
		\smallskip
		
	\item \emph{Bounded-growth graphs:} A graph $G(V,E)$ is said to be of bounded growth iff there exists a function such that for every vertex $v \in V$ and integer $r \geq 1$, the size of the largest independent set in the $r$-neighborhood
	of $v$ is bounded by $f(r)$. Bounded-growth graphs in turn capture several intersection graphs of geometrical objects such
	 as proper interval graphs~\cite{HalldorssonKS03}, unit-disk graphs~\cite{Halldorsson09}, quasi-unit-disk graphs~\cite{KuhnWZ08}, and general disc graphs~\cite{HalldorssonK15}.
	
	\smallskip
	
	\item \emph{Claw-free graphs:} Graphs with neighborhood independence $\beta$ can be alternatively defined as $\beta$-claw-free graphs, i.e., graphs that do not contain $K_{1,\beta}$ as an induced subgraph.
	 Claw-free graphs have been subject of extensive study in structural graph theory; see the series of papers by Chudnovsky and Seymour, starting with~\cite{ChudnovskyS05}, and
	the survey by Faudree\etal~\cite{FaudreeFR97}.
	%~\cite{Minty80}
\end{itemize}

Above graphs appear naturally in the context of symmetry breaking problems (for instance in the study of wireless networks), and there have been numerous works on MIS and MM in graphs with bounded neighborhood independence
and their special cases (see, e.g.~\cite{KuhnWZ08,SchneiderW08,Halldorsson09,BarenboimE11,BarenboimE13,HalldorssonK15,FischerGK17,BarenboimO20a,BarenboimO20b} and references therein).
%Our results in this paper makes further progress in understanding the (time-)complexity of MIS and MM in these graphs.

\subsection{Other Implications}\label{sec:implications}

Despite the simplicity of our algorithms in Result~\ref{res:main}, they lead to several interesting implications, when combined with some known results and/or techniques:

\begin{enumerate}[(a)]
	\item \emph{Approximate vertex cover and matching:} Our MM algorithm in Result~\ref{res:main} combined with well-known properties of maximal matchings
	implies an $O(n\log{n} \cdot \beta(G))$ time $2$-approximation algorithm to both maximum matching and minimum vertex cover. For graphs with constant neighborhood independence, our results
	improve upon the sublinear-time algorithms of~\cite{OnakRRR12} that achieve $(2+\eps)$-approximation to the \emph{size} of the optimal solution to both problems but do not find the actual edges or vertices
	in $\Ot_{\eps}(n)$ time on general graphs.
	
	\smallskip
	
	\item \emph{Caro-Wei bound and  approximation of maximum independent set:} The Caro-Wei bound~\cite{Caro79,Wei81} states that any graph $G(V,E)$ contains an independent set of size at least $\sum_{v \in V} \frac{1}{\deg{v}+1}$,
	and there is a substantial interest in obtaining independent sets of this size (see, e.g.~\cite{HalldorssonHLS16,HalldorssonK15,CormodeDK17} and references therein). One standard way of obtaining such independent set is to
	run the greedy MIS algorithm on the vertices of the graph in the increasing order of their degrees. As our Result~\ref{res:main} implies that one can implement the greedy MIS algorithm for \emph{any} ordering of vertices,
	we can sort the vertices in $O(n)$ time and then run our deterministic algorithm with this order to obtain an independent set with Caro-Wei bound size in $O(n\beta(G))$ time. Additionally, it is easy
	to see that on graphs with $\beta(G) = \beta$, any MIS is a $\beta$-approximation to the \emph{maximum} independent set (see, e.g.~\cite{KuhnNMW05,SchneiderW08}).
	We hence also obtain a constant factor approximation in $O(n)$ time for maximum independent set on graphs with bounded neighborhood independence.
	
	\smallskip
	
	\item \emph{Separation of $(\Delta+1)$-coloring with MIS and MM:} Assadi\etal~\cite{AssadiCK19} gave an $\Ot(n\sqrt{n})$ time algorithm for $(\Delta+1)$ coloring and an $\Omega(n^2)$ time lower bound for MIS and MM on
	general graphs. It is also shown
	in~\cite{AssadiCK19} that $(\Delta+1)$ coloring requires $\Omega(n\sqrt{n})$ time and in fact the lower bound holds for graphs with constant neighborhood independence. Together with
	our Result~\ref{res:main}, this implies an interesting separation between the time-complexity of MIS and MM, and that of $(\Delta+1)$-coloring: while the time complexity of MIS and MM is strictly higher than
	 that of $(\Delta+1)$ coloring in general graphs, the exact opposite relation holds for graphs with small neighborhood independence number.
	
	 \smallskip
	
	 \item \emph{Efficient MM computation via MIS on line graphs:} The line graph $L(G)$ of a graph $G$  contains $m$ vertices corresponding to edges of $G$ and up to $O(mn)$ edges.
	 Moreover, for any  graph $G$, $\beta(L(G)) \leq 2$. As an MIS in $L(G)$ corresponds to an MM in $G$, our results suggest that despite the larger size of $L(G)$, perhaps surprisingly, computing an MM of $G$
	 through computing an MIS for $L(G)$ is just as efficient as directly computing an MM of $G$ (assuming direct access to $L(G)$). This observation may come into play in real-life situations where there is no direct access
	 to the graph but rather only to its line graph.
	
	\end{enumerate}

\subsection*{Preliminaries and Notation}
For a graph $G(V,E)$ and vertex $v \in V$, $N(v)$ and $\deg{v}$ denote
the neighbor-set and degree of a vertex $v$, respectively. For a subset $U \subseteq V$, $\degto{v}{U}$ denotes the degree of $v$ to vertices in $U$.
Denote by $\beta(G)$ the neighborhood independence number of graph $G$.

\section{Technical and Conceptual Highlights}\label{sec:high-level}

Our first (non-technical) contribution is in identifying the neighborhood independence number as the ``right'' measure of time-complexity for both  MIS and MM.
We then show that surprisingly simple algorithms for these problems
run in sublinear-time on graphs with bounded $\beta(G)$.

The textbook greedy algorithm for MIS works as follows:   scan the vertices in an arbitrary order and add each scanned vertex to a set $\mis$ iff it does not already have a neighbor in $\mis$. Clearly the runtime of this algorithm is $\Theta(\sum_{v \in V} \deg{v}) = \Theta(m)$
and this  bound does not improve for graphs with small $\beta$.
We can slightly tweak this algorithm by making every vertex that joins $\mis$ to \emph{mark} all its neighbors and  simply ignore 
scanning the already marked vertices. This tweak however is not useful in general graphs
as the algorithm may waste time by repeatedly marking the same vertices over and over again without making much further progress
(the complete bipartite graph is an extreme example).
The same problem manifests itself in  other algorithms, including those for MM, and is  at the root of the lower bounds in~\cite{AssadiCK19} for sublinear-time computation of MIS and MM. 

We prove that this issue cannot arise in graphs with bounded neighborhood independence.
Noting that the runtime of the greedy MIS algorithm that uses ``marks'' is $\Theta(m_\mis)$, where we define $m_\mis := \sum_{v \in \mis} \deg{v}$,
a key observation is that $m_\mis$ is much smaller than $m$ when $\beta$ is small.
Indeed, as the vertices of $\mis$ form an independent set, all the edges incident on $\mis$ lead to $V \setminus \mis$, and so if $m_\mis$ is large, then the \emph{average} degree of $V \setminus \mis$ \emph{to} $\mis$ cannot be ``too small''; however,
the latter average degree cannot be larger than $\beta$ as otherwise there is some vertex in $V \setminus \mis$ that is incident to more than $\beta$ independent vertices, a contradiction. This is all we need to conclude that the runtime of the greedy MIS algorithm that uses marks is bounded by $O(n \cdot \beta)$.

Both the MIS algorithm and its analysis are remarkably simple, and {\em in hindsight}, %it makes perfect sense %that things should be fast and easy when $\beta$ is small,
this is not surprising since this parameter $\beta$ is in a sense ``tailored'' to the MIS problem.
Although MM and MIS problems are intimately connected to each other,
the MM problem appears to be much more intricate for graphs with bounded neighborhood independence. Indeed, while the set $U$ of unmatched vertices in any MM forms an independent set 
and hence total number $m_U$ of edges incident on $U$ cannot be too large by the above argument, the runtime of greedy or any other algorithm cannot be bounded in terms of $m_U$ (as $m_U$ can simply be zero). 
In fact, it is provably impossible to adjust our argument for MIS to the MM problem
due to our lower bound for deterministic MM algorithms (Theorem~\ref{thm:mm-lower}) that shows that any such algorithm must incur a runtime of $\Omega(n^2)$ even for $\beta = 2$.

The main technical contribution of this paper is thus in obtaining a fast randomized MM algorithm for graphs with bounded $\beta$. 
Our starting point is the modified randomized greedy (MRG) algorithm of~\cite{DyerF91,AronsonDFS95} that finds an MM by iteratively picking an \emph{unmatched} vertex $u$ uniformly at random and matching it to a uniformly 
at random chosen \emph{unmatched} neighbor $v \in N(u)$. On its own, this standard algorithm does not benefit from small values of $\beta$: while picking an unmatched vertex $u$ is easy, finding an unmatched neighbor $v$ for $u$ is too 
time-consuming
in general. We instead make the following simple but crucial modification: instead of picking $v$ from unmatched neighbors of $u$, we simply sample $v$ from the set of \emph{all} neighbors of $u$  and only match it to $u$ if it is also unmatched; otherwise
we sample another vertex $u$ and continue like this (additional care is needed to ensure that this process even terminates but we postpone the details to Section~\ref{sec:mm-upper}).

To analyze the runtime of this modified algorithm, we leverage the above argument for MIS and take it to the next step to prove a basic structural property of graphs with bounded neighborhood independence: for any set $P$ of vertices, 
a constant fraction of vertices are such that their inner degree inside $P$ is ``not much smaller'' than their total degree (depending both on $\beta$ and size of $P$). Letting $P$ to be the set of unmatched vertices in the above algorithm
allows us to bound the number of iterations made by the algorithm before finding the next matching edge, and ultimately bounding the overall runtime of the algorithm  by $O(n\log{n} \cdot \beta)$  in expectation and with high probability. 
%The above discussion oversimplifies many details. 

%We remark that the randomized greedy algorithm described here

\subsubsection*{Technical Comparison with Prior Work}
Our work is most closely related to the $\Ot(n\sqrt{n})$-time $(\Delta+1)$-coloring algorithm of Assadi, Chen, and Khanna~\cite{AssadiCK19} (and their $\Omega(n^2)$ time lower bounds for MIS and MM on general graphs), 
as well as the series of work by Goel, Kapralov, and
Khanna~\cite{GoelKK09,GoelKK09b,GoelKK10} on finding perfect matchings in regular bipartite graphs that culminated in an $O(n\log{n})$ time algorithm.

The coloring algorithm of~\cite{AssadiCK19} works by \emph{non-adaptively sparsifying} the graph into $O(n\log^{2}(n))$ edges in $\Ot(n\sqrt{n})$ time in such a way that a $(\Delta+1)$ coloring of the original graph can be found quickly from this sparsifier.
The algorithms in~\cite{GoelKK09,GoelKK09b} were also based on the high-level idea of sparsification but the final work in this series~\cite{GoelKK10} instead used a \emph{(truncated) random walk} approach to speed up augmenting path computations in regular graphs.
The sparsification methods used in~\cite{AssadiCK19,GoelKK09,GoelKK09b} as well as the random walk approach of~\cite{GoelKK10} are all quite different from our techniques in this paper that are tailored to graphs with bounded neighborhood independence.
Moreover, even though every perfect matching is clearly maximal, our results and~\cite{GoelKK09,GoelKK09b,GoelKK10} are incomparable as $d$-regular bipartite graphs and graphs
with bounded neighborhood independence are in a sense the exact opposite of each other: for a $d$-regular bipartite graph, $\beta(G) = d$ which is the largest possible for graphs with maximum degree $d$.

\section{Maximal Independent Set}\label{sec:mis-upper}

The standard greedy algorithm for MIS works as follows: Iterate over vertices of the graph in an arbitrary order and insert each one to an initially empty set $\mis$ if none of its neighbors have already been inserted to $\mis$.
By the time all vertices have been processed, $\mis$ clearly provides an MIS of the input graph. See Algorithm~\ref{alg:mis} for a pseudo-code.

\begin{algorithm2e}[h!]

\caption{ \vspace{2pt} The (Deterministic) Greedy Algorithm for Maximal Independent Set. }\label{alg:mis}
\SetAlgoLined

\smallskip

\textbf{Input:} An $n$-vertex graph $G(V,E)$ given in adjacency array representation.

\smallskip

\textbf{Output:} An MIS $\mis$ of $G$.

\smallskip

Initialize $\mis = \emptyset$ and $\markn[v_i] \leftarrow \false$ for all vertices $v_i \in V$ where $V := \{v_1,\ldots,v_n\}$.

\For{$i=1$ \textbf{to} $n$}{

\smallskip

\If{$\markn[v_i] = \false$}{
add $v_i$ to $\mis$ and set $\markn[u] \leftarrow \true$ for all $u \in N(v_i)$.
}
}

\textbf{Return} $\mis$.

\end{algorithm2e}

We prove that this algorithm is fast on graphs with bounded neighborhood independence. 

\begin{theorem}\label{thm:mis-upper}
	The greedy MIS algorithm (as specified by Algorithm~\ref{alg:mis}) computes a maximal independent set of a graph $G$ given in adjacency array representation  in $O(n \cdot \beta(G))$ time.
\end{theorem}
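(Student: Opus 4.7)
The plan is to separately establish correctness (which is entirely standard) and then bound the running time by $O(n \cdot \beta(G))$. Correctness is immediate from the algorithm: a vertex $v_i$ is inserted into $\mis$ only when $\markn[v_i]=\false$, meaning no earlier neighbor of $v_i$ has been placed in $\mis$, so $\mis$ stays independent throughout; and conversely, when the loop terminates, every vertex is either in $\mis$ or has been marked by a neighbor in $\mis$, so $\mis$ is maximal.

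For the runtime, I would split the total work into two parts. The main loop performs, for each $i \in \{1,\ldots,n\}$, an $O(1)$ look-up of $\markn[v_i]$, contributing $O(n)$ overall. The only other work occurs at iterations in which $v_i$ is actually inserted into $\mis$: the algorithm then walks through the adjacency array of $v_i$ and flips the mark of each neighbor, taking $O(\deg{v_i})$ time (using that adjacency arrays provide $O(1)$ access per neighbor). Summed over all insertions, this additional cost is $m_\mis := \sum_{v \in \mis} \deg{v}$, so the total runtime is $O(n + m_\mis)$. It therefore suffices to prove $m_\mis \leq n \cdot \beta(G)$.

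The key structural step is a double-count of the edges incident on $\mis$. Because $\mis$ is independent, every such edge has exactly one endpoint in $\mis$ and one endpoint in $V \setminus \mis$, so
\[
    m_\mis \;=\; \sum_{v \in \mis} \deg{v} \;=\; \sum_{u \in V \setminus \mis} \degto{u}{\mis}.
\]
For any fixed $u \in V \setminus \mis$, the set $N(u) \cap \mis$ is independent (as a subset of $\mis$) and lies entirely in the neighborhood of $u$, so by the definition of the neighborhood independence number, $|N(u) \cap \mis| = \degto{u}{\mis} \leq \beta(G)$. Substituting gives $m_\mis \leq (n - |\mis|) \cdot \beta(G) \leq n \cdot \beta(G)$, and the total runtime is $O(n + n \cdot \beta(G)) = O(n \cdot \beta(G))$.

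I do not expect a serious obstacle; the only conceptual point — clearly signposted in Section~\ref{sec:high-level} — is realizing that the relevant quantity to charge marking against is $m_\mis$ rather than $m$, after which the independence of $\mis$ combined with the definition of $\beta(G)$ yields the bound in one line. The argument is purely combinatorial and needs no assumption on the ordering $v_1,\ldots,v_n$, which is also why the discussion of Caro–Wei in Section~\ref{sec:implications} can be supported by the same algorithm with a pre-sorted order.
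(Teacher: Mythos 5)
Your proof is correct and takes essentially the same approach as the paper: both split the runtime as $O(n + \sum_{v\in\mis}\deg{v})$ and bound the latter sum by $n\cdot\beta(G)$ via a double-count of the edges from the independent set $\mis$ to $V\setminus\mis$. The only cosmetic difference is that you bound each $\degto{u}{\mis}\le\beta(G)$ directly and sum, whereas the paper's Claim~\ref{clm:mis-alg-degree} argues by contradiction through an averaging step; the underlying observation is identical.
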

\begin{proof}
Let $G(V,E)$ be an arbitrary graph. Suppose we run Algorithm~\ref{alg:mis} on $G$ and obtain $\mis$ as the resulting MIS.
To prove Theorem~\ref{thm:mis-upper}, we use the following two simple claims.

\begin{claim}\label{clm:mis-alg-time}
	The time spent by Algorithm~\ref{alg:mis} on a graph $G(V,E)$ is $O(n + \sum_{v \in \mis} \deg{v})$.
\end{claim}
\begin{proof}
	Iterating over vertices in the \textbf{for} loop takes $O(n)$ time. Beyond that, for each vertex joining the MIS $\mis$, we spend time that is linear in its degree to mark all its neighbors. 
\end{proof}

\begin{claim}\label{clm:mis-alg-degree}
	For any independent set $I \subseteq V$ in $G$, $\sum_{v \in I} \deg{v} \leq n \cdot \beta(G)$.
\end{claim}
\begin{proof}
	Let $E(I)$ denote the edges incident on vertices in the independent set $I$. Since $I$ is an independent set, these edges connect vertices of $I$ with vertices of $V \setminus I$. Suppose towards a contradiction that
	$\card{E(I)} = \sum_{v \in I} \deg{v} > n \cdot \beta(G)$. By a double counting argument, there must exist a vertex $v$ in $V \setminus I$ with at least $\card{E(I)}/\card{V \setminus I} > \beta(G)$ neighbors in $I$. But since $I$ is an independent set, this 
	means that there exists an independent set of size $> \beta(G)$ in the neighborhood of $v$, which contradicts the fact that $\beta(G)$ is the neighborhood independence number of $G$. 
\end{proof}

Theorem~\ref{thm:mis-upper} now follows from Claims~\ref{clm:mis-alg-time} and~\ref{clm:mis-alg-degree} as $\mis$ is an independent set of $G$. 
\end{proof}

\section{Maximal Matching}\label{sec:mm-upper}

\newcommand{\Vexp}{\ensuremath{V_{\textnormal{\texttt{exp}}}}}
\newcommand{\Vrem}{\ensuremath{V_{\textnormal{\texttt{rem}}}}}

\newcommand{\trem}{\ensuremath{\tau_{\textnormal{\texttt{rem}}}}}

\newcommand{\misalg}{\ensuremath{\textnormal{\textsf{MIS-Algorithm}}}\xspace}
\newcommand{\missampler}{\ensuremath{\textnormal{\textsf{MIS-Sampler}}}\xspace}

We now consider the maximal matching (MM) problem. Similar to MIS, a standard greedy algorithm for MM is to iterate over the vertices in arbitrary order and match each vertex to one of their unmatched neighbors (if any).
However, as we show in Section~\ref{sec:mm-lower} this and any other \emph{deterministic} algorithm for MM, cannot run in sublinear-time even when $\beta(G) = 2$.

We instead consider the following variant of the greedy algorithm, referred to as the \emph{(modified) randomized greedy algorithm}, put forward by~\cite{DyerF91,AronsonDFS95} and  extensively studied in the literature primarily with respect to its approximation ratio for the maximum matching problem (see, e.g.~\cite{MillerP97,PoloczekS12} and the references therein). Pick an unmatched vertex $u$ uniformly
at random; pick an unmatched vertex $v$ incident on $u$ uniformly at random and add $(u,v)$ to the matching $M$; repeat as long as there is an unmatched edge left in the graph. It is easy to see that at the end of the algorithm $M$ will be an MM of $G$.

As it is, this algorithm is not suitable for our purpose as finding an unmatched vertex $v$ incident on $u$ is too costly. We thus instead consider the following variant which samples the set $v$ from all neighbors of $u$ and 
only match it to $u$ if $v$ is also unmatched (we also change the final check of the algorithm for maximality of $M$ with a faster computation). See Algorithm~\ref{alg:mm} for a pseudo-code after proper modifications.

\begin{algorithm2e}[h!]

\caption{ \vspace{2pt} The (Modified) Randomized Greedy Algorithm for Maximal Matching. }\label{alg:mm}
\SetAlgoLined

\smallskip

\textbf{Input:} An $n$-vertex graph $G(V,E)$ given in adjacency array representation.

\smallskip

\textbf{Output:} A maximal matching $M$ of $G$.

\smallskip

Initialize $M = \emptyset$ and $U = V$. %and define the threshold $\tau := 4\beta(G)\log{n}$.

\smallskip

\While{$U \neq \emptyset$}{

\smallskip

 Define the threshold $\tau:= \tau(U) = \frac{4n \cdot \beta(G)}{\card{U}}$.

\smallskip
Sample a vertex $u$ uniformly at random from $U$.

\smallskip
\uIf{$\deg{u} < \tau$}{
Choose a random vertex $v$ from $N(u) \cap U$ (if non-empty), add $(u,v)$ to $M$ and set $U \leftarrow U \setminus \set{u,v}$. If $N(u) \cap U = \emptyset$, set $U \leftarrow U \setminus \set{u}$.
}\Else{

\smallskip

Sample a vertex $v$ uniformly at random from $N(v)$.

\smallskip

\uIf{$v \in U$}{
add $(u,v)$ to $M$ and set $U \leftarrow U \setminus \set{u,v}$.
}
}
}

\textbf{Return} $M$.

\end{algorithm2e}

We remark that the first \textbf{if} condition in Algorithm~\ref{alg:mm} is used to remove the costly operation of checking if any unmatched edge is left in the graph. It is easy to see that
this algorithm always output an MM. 

We prove that Algorithm~\ref{alg:mm} is fast both in expectation and with high probability on graphs with bounded neighborhood independence. We also note that as stated, Algorithm~\ref{alg:mm} actually assumes knowledge of $\beta(G)$ (needed
for the definition of the threshold parameter $\tau$). However, we show at the end of this section that this assumption can be lifted easily and obtain a slight modification of Algorithm~\ref{alg:mm} with the same asymptotic runtime that does \emph{not} require any knowledge of $\beta(G)$.

\begin{theorem}\label{thm:mm-upper}
	The modified randomized greedy MM algorithm (as specified by Algorithm~\ref{alg:mm}) computes a maximal matching of a graph $G$ given in adjacency array representation in $O(n\log{n} \cdot \beta(G))$ time in expectation and with high probability.
\end{theorem}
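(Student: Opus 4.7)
The plan is to decompose the running time into the cost contributed by low-degree iterations (those where $\deg{u} < \tau$) and the cost contributed by high-degree iterations, and bound each separately. Correctness of Algorithm~\ref{alg:mm} is a short check: a vertex is removed from $U$ without being matched only in the low branch, and only at a moment when its current neighbors in $U$ are empty, which certifies that at termination no edge between two unmatched vertices remains. A low iteration takes $O(\deg{u}) = O(\tau) = O(n\beta(G)/|U|)$ time and always shrinks $U$, so at most one low iteration happens per value of $|U|$; summing over $|U| = 1, \ldots, n$, the low branch contributes $O(n\beta(G)\log{n})$ time deterministically. A high iteration takes only $O(1)$ time but may fail to change the state, so the remaining challenge reduces to bounding the expected (and high-probability) total number of iterations.

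The core technical ingredient will be the following structural lemma: \emph{for any non-empty $P \subseteq V$ and $\tau := 4n\beta(G)/|P|$, at least $|P|/2$ vertices $u \in P$ satisfy either} (i)\ $\deg{u} < \tau$, \emph{or} (ii)\ $|N(u) \cap P|/\deg{u} \geq |P|/(8n\beta(G))$. I would prove this by contradiction: let $P'$ be the set of vertices violating both conditions and assume $|P'| > |P|/2$. Let $I$ be a maximal independent set of $G[P']$. Claim~\ref{clm:mis-alg-degree} (which is really a bound about any independent set in $G$) gives $\sum_{u \in I} \deg{u} \leq n\beta(G)$, and combined with $\deg{u} \geq \tau$ for $u \in I \subseteq P'$ this forces $|I| \leq n\beta(G)/\tau = |P|/4$. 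Maximality of $I$ in $G[P']$ then yields $\sum_{u \in I} |N(u) \cap P'| \geq |P'|-|I| \geq |P|/4$. On the other hand, the ratio condition on $u \in P'$ bounds $\sum_{u \in I} |N(u) \cap P| < n\beta(G) \cdot |P|/(8n\beta(G)) = |P|/8$, contradicting the previous inequality since $|N(u) \cap P'| \leq |N(u) \cap P|$.

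Applied to the unmatched set $U$ at any state of the algorithm, the lemma yields $p_U \geq |U|/(16n\beta(G))$, where $p_U$ denotes the probability that a single iteration changes the state. Indeed each of the $\geq |U|/2$ qualifying vertices contributes at least $|U|/(8n\beta(G))$ to $|U| \cdot p_U$: a low vertex contributes $q(u) = 1 \geq |U|/(8n\beta(G))$ (using $|U| \leq n$ and $\beta(G) \geq 1$), and a high vertex satisfying~(ii) contributes $q(u) = |N(u) \cap U|/\deg{u} \geq |U|/(8n\beta(G))$ directly. Therefore the expected number of iterations at state $U$ is at most $1/p_U \leq 16n\beta(G)/|U|$, and since each value of $|U|$ is visited at most once, summing over states gives $O(n\beta(G)\log{n})$ expected total iterations and, combined with the low-branch bound, $O(n\beta(G)\log{n})$ expected total time.

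For the high-probability bound I would couple the iteration count at each state $U$ with a geometric random variable of parameter $|U|/(16n\beta(G))$ and apply a standard concentration argument to the resulting sum of independent geometrics; a convenient route is a dyadic grouping of states by $|U| \in (n/2^{j+1}, n/2^j]$, where within each level the aggregated count concentrates around its expectation of $O(n\beta(G))$ (with a direct $(1-p)^t \leq e^{-pt}$ tail bound handling the topmost levels that contain only $O(1)$ states), and a union bound over $O(\log{n})$ levels gives total iteration count $O(n\beta(G)\log{n})$ with probability $1-1/\poly(n)$. Finally, I would remark that the superficial dependence of Algorithm~\ref{alg:mm} on $\beta(G)$ is removable by a standard doubling trick: run the algorithm with guesses $\hat\beta = 1, 2, 4, \ldots$, aborting each attempt once the time exceeds the budget implied by $\hat\beta$. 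The main obstacle I anticipate is calibrating the structural lemma: the threshold $\tau = 4n\beta(G)/|P|$ is chosen precisely so that the MIS bound $|I| \leq |P|/4$ and the edge-count upper bound $|P|/8$ combine into a factor-of-two contradiction, and carrying the resulting $p_U \geq |U|/(16n\beta(G))$ through the harmonic sum $\sum_{k=1}^n 16n\beta(G)/k$ is what produces the final $O(n\beta(G)\log{n})$ bound.
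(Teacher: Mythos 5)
Your overall architecture — decomposing the runtime into the deterministic cost of low-degree iterations plus the random iteration count, proving a ``many good vertices'' structural lemma, bounding the expected number of iterations by a harmonic sum, and finishing with a concentration bound and a doubling trick for unknown $\beta(G)$ — is the same as the paper's. Your proof of the structural lemma, however, takes a genuinely different and arguably more elementary route. The paper (Lemma~\ref{lem:mm-good} via Claim~\ref{clm:mm-is-bad}) constructs an independent set inside the bad set $B$ by a random-permutation argument, lower-bounds its expected number of edges into $V \setminus U$, and then derives a contradiction by double counting against a single vertex $v \in V\setminus U$. You instead take a \emph{maximal} independent set $I$ in $G[P']$, use Claim~\ref{clm:mis-alg-degree} together with $\deg u \geq \tau$ to force $|I| \leq |P|/4$, use \emph{maximality} of $I$ in $G[P']$ to lower-bound $\sum_{u\in I}|N(u)\cap P'| \geq |P'|-|I| \geq |P|/4$, and then use the ratio condition to upper-bound $\sum_{u\in I}|N(u)\cap P| < |P|/8$, which contradicts $|N(u)\cap P'| \leq |N(u)\cap P|$. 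This is correct and cleanly avoids the probabilistic construction; it also packages the neighborhood-independence input entirely through Claim~\ref{clm:mis-alg-degree}, which is a nice modularity win. Your calibration of constants ($\tau = 4n\beta/|P|$ giving $|I| \leq |P|/4$ vs.\ the edge count $< |P|/8$) checks out, and the resulting per-iteration success probability $p_U \geq |U|/(16n\beta(G))$ matches the paper's effective bound of $\delta(U)/4$ up to constants.

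The one place your proposal is thin is exactly where the paper warns it would be: the high-probability bound. You claim that in each dyadic level $|U| \in (n/2^{j+1}, n/2^j]$, ``the aggregated count concentrates around its expectation of $O(n\beta(G))$,'' and you handle only the extremes (levels with $\Omega(\log n)$ states by standard Chernoff, levels with $O(1)$ states by a direct $(1-p)^t$ tail). But levels with $\omega(1)$ and $o(\log n)$ states satisfy neither: the number of geometric trials is too small for a $1/\poly(n)$ multiplicative tail, and a naive per-state union bound at slack $O(\log n / p_k)$ costs an extra $\Theta(\log\log n)$ factor once summed over those $\Theta(\log\log n)$ levels. This is precisely the boundary issue the paper flags before Claim~\ref{clm:mm-t-whp} as the reason it switches to coupling $X_k$ with exponential random variables $Y_k$ and running a moment-generating-function argument with $s^* = 1/(2\mu_1)$, which yields $\Pr(Y \geq 4\mu) \leq e^{-\mu/\mu_1} = 1/n$ in one stroke without any case split on $|U|$. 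Your dyadic approach is salvageable — allow an \emph{additive} slack of $O(\log n / p_j)$ per level rather than a multiplicative one, which sums to $O\big(\log n \cdot \sum_j 2^j \beta(G)\big) = O(n\beta(G)\log n)$ and so does not hurt the final bound — but as sketched the statement of per-level concentration is false for small levels, and the ``$O(1)$ states'' escape hatch does not cover the intermediate regime. I'd either adopt the paper's MGF argument or spell out the additive-slack version explicitly.
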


	Let $t$ denote the number of iterations of the \textbf{while} loop in Algorithm~\ref{alg:mm}. We can bound the runtime of this algorithm based on $t$ as follows.
	
	\begin{claim}\label{clm:mm-runtime}
		Algorithm~\ref{alg:mm} can be implemented in $O(n \log{n} \cdot \beta(G) + t)$ time.
	\end{claim}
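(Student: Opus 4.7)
The plan is to charge the per-iteration work to two disjoint ``buckets'' and bound each bucket separately, with the dominant cost coming from the low-degree branch (where we must scan $N(u)$) and an additive $O(t)$ term covering the constant-time work done in the high-degree branch.

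First I would describe the data structures. The set $U$ must support three operations in $O(1)$ worst-case time: uniform random sampling, membership testing (needed to check whether the sampled neighbor $v$ lies in $U$), and deletion of an arbitrary element. This is achieved by the standard trick of storing $U$ as a dynamic array $A[1..|U|]$ together with an index map $\mathrm{pos}: V \to \mathbb{N} \cup \{\bot\}$ so that $A[\mathrm{pos}(x)] = x$ whenever $x \in U$; deletion is performed by swapping with the last element and decrementing the length, and sampling is done by drawing a random index in $[1,|U|]$. Initializing these structures, together with setting $M = \emptyset$ and $U = V$, takes $O(n)$ time. Computing $\tau(U) = 4n\beta(G)/|U|$ at the start of an iteration is $O(1)$, and so is the degree test $\deg(u) < \tau$, since $\deg(u)$ is available from the adjacency-array representation.

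Next I would bound the work per iteration. In the \textbf{else} (high-degree) branch, every step—sampling $v$ uniformly from $N(u)$ via the adjacency array, testing $v \in U$, and (if applicable) deleting $u,v$ from $U$—takes $O(1)$ time. Summed over all iterations, these contribute $O(t)$. In the \textbf{if} (low-degree) branch, we must scan $N(u)$ to compute $N(u) \cap U$ and sample from it, which costs $O(\deg(u)) = O(\tau(U)) = O(n\beta(G)/|U|)$. The critical observation is that each execution of the low-degree branch removes at least one vertex (namely $u$) from $U$, so the values of $|U|$ at the starts of successive low-degree iterations are strictly decreasing. Thus the total cost of the low-degree branch is bounded by
\[
\sum_{k=1}^{n} O\!\left(\frac{n \cdot \beta(G)}{k}\right) \;=\; O\bigl(n \log n \cdot \beta(G)\bigr),
\]
a harmonic sum.

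Adding the three contributions—$O(n)$ for initialization, $O(n \log n \cdot \beta(G))$ for low-degree iterations, and $O(t)$ for high-degree iterations—yields the claimed runtime $O(n \log n \cdot \beta(G) + t)$. The only subtle point is verifying that the data structure implementing $U$ really supports all three required operations in $O(1)$; this is routine but worth stating explicitly, since without it every \textbf{while}-loop iteration would incur a logarithmic overhead and swamp the $O(t)$ term. I do not expect any serious obstacle: the heart of the bound is the harmonic summation, which depends only on the monotone decrease of $|U|$ across low-degree iterations, and this is immediate from the algorithm's update rule.
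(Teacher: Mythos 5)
Your proof is correct and follows essentially the same route as the paper: the same $O(1)$-time array-plus-index-map data structure for $U$, and the same decomposition into high-degree iterations (each $O(1)$, contributing $O(t)$) and low-degree iterations (each $O(\tau(U))$, charged via a harmonic sum to $O(n\log n\cdot\beta(G))$ because every such iteration strictly shrinks $U$). No meaningful difference in argument or structure.
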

	\begin{proof}
First, we would like to store the set $U$ in a data structure that supports random sampling and deletion of a vertex from $U$, as well as determining whether a vertex is currently in $U$ or not, in constant time.
This data structure can be  easily implemented using two arrays $A_1$ and $A_2$; we provide the   rather tedious details for completeness. The arrays are initialized as $A_1[i] = A_2[i] = i$ for all $i = 1, \ldots, n$, where $A_2[i]$ holds the index of the cell in $A_1$ where $v_i$ is stored, or -1 if $v_i$ is not in $U$, while $A_1$ stores the 
vertices of $U$ in its first $|U|$ cells, identifying each $v_i$ with   index $i$.  %More accurately, while all cells of $A_2$ may be used at any point throughout the execution of the algorithm, the only relevant cells of $A_1$ are the first $|U|$ ones.
When any unmatched vertex $v_i$ is removed from $U$, we first use $A_2[i]$ to determine the cell where $v_i$ is stored in $A_1$, then we move the unmatched vertex  stored at the last cell in $A_1$, $A_1[|U|]$, to the cell currently occupied by $v_i$ by setting $A_1[A_2[i]] = A_1[|U|]$, and finally set  $A_2[A_1[|U|] = A_2[i],
A_2[i] = -1$. Randomly sampling a vertex from $U$ and determining whether a vertex belongs to $U$ can now be done in $O(1)$ time.  % using these arrays.
% $A_1[i] = A_1[|U|]$, $A_2[i] = A_
%We store $U$ throughout the algorithm in a simple array which 

		Using these arrays  each iteration of the {\bf while} loop in which $\deg{u} \geq \tau$ can be carried out within $O(1)$ time. 
Iterations for which $\deg{u} < \tau$ are more costly, due to the need to determine $N(u) \cap U$.
Nonetheless, in each such iteration we spend at most $O(\tau)$ time while at least one vertex is removed from $U$, hence the time required by all such iterations is bounded by $\sum_{k=1}^{n} O(\frac{n \cdot \beta(G)}{k}) = O(n\log{n} \cdot \beta(G))$.
It follows that the total runtime of the algorithm is $O(n \log{n} \cdot \beta(G) + t)$.
	\end{proof}
	
	The main ingredient of the analysis is thus to bound the number $t$ of iterations. Before proceeding we introduce some definition.
	We say that an iteration of the {\bf while} loop \emph{succeeds} iff we remove at least one vertex from $U$ in this iteration.
	Clearly, there can be at most $n$ successful iterations. We prove that each iteration of the
	algorithm is successful with a sufficiently large probability, using which we bound the total number of iterations.
	
	To bound the success probability, we shall argue that for sufficiently many vertices $u$ in $U$, the number of its neighbors in $U$, referred to as its \emph{internal} degree, is proportionate to the number of its neighbors
	outside $U$, referred to as its \emph{external} degree; for any such vertex $u$, a random neighbor $v$ of $u$ has a good chance of belonging to $U$.
	This is captured by the following definition.

	\begin{definition}[Good vertices]\label{def:delta-good}	
	For a parameter $\delta \in (0,1)$, we say a vertex $u \in U$ is \emph{$\delta$-good} iff $\degto{u}{U} \geq \delta \cdot \degto{u}{V \setminus U}$ or $\deg{u} < 1/\delta$.
	\end{definition}

	If we sample a $\delta$-good
	vertex $u \in U$ in some iteration, then with probability $\geq \delta/(1+\delta) \geq \delta/2$ that iteration succeeds. It thus remains to show that many vertices in $U$ are   good for an appropriate choice of $\delta$.
	We use the bounded neighborhood independence property (in a  more sophisticated way than it was used
	in the proof of Theorem~\ref{thm:mis-upper}) to prove the following lemma, which lies at the core of the analysis. See Figure~\ref{fig:mm-good} for an illustration.

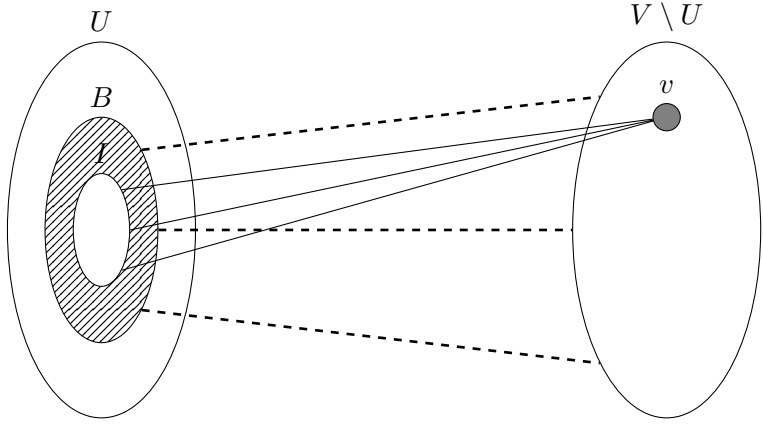
\begin{figure}[t!]
\centering
\begin{tikzpicture}
	\node[ellipse, draw, minimum height=5cm, minimum width=2.5cm, label=$U$] (U) {};
	\node[ellipse, draw, minimum height=5cm, minimum width=2.5cm, label=$V \setminus U$] (V) [right=5cm of U]{};
	\node[ellipse, draw, minimum height=3cm, minimum width=1.5cm, label=$B$, pattern=north east lines] (B) at ($(U)+(0pt,0pt)$){};
	\node[ellipse, draw, minimum height=1.5cm, minimum width=0.75cm, label=$I$, fill=white] (I) at ($(B)+(0pt,0pt)$){};
	\node[circle, draw, minimum height=0.05cm, minimum width=0.05cm, fill=gray, label=$v$] (v) at ($(V)+(0,1.5cm)$){};
	
	\draw[black, dashed, line width=1pt]
		(B.north east) -- (V.north west)
		(B.east) -- (V.west)
		(B.south east) -- (V.south west);
		
	\draw[black]
		(I.north east) -- (v)
		(I.east) -- (v)
		(I.south east) -- (v);
		
\end{tikzpicture}
\caption{An illustration of proof of Lemma~\ref{lem:mm-good}: $B$ is the set of vertices in $U$ that are \emph{not} $\delta$-good, i.e., have a ``large'' external degree. While $B$ is not necessarily an independent set, there exists 
a ``large'' independent set $I$ in $B$ (Claim~\ref{clm:mm-is-bad}). As vertices in $I$ have a large external degree, there must exists a single vertex $v \in V \setminus U$, incident on more than $\beta(G)$ vertices of $I$; a contradiction.}
\label{fig:mm-good}
\end{figure}

	 \begin{lemma}\label{lem:mm-good}
	 	Fix any choice of $U$ in some iteration and let $\delta := \delta(U) = 1/\tau(U)$ (for the parameter $\tau(U)$ defined in Algorithm~\ref{alg:mm}).
		Then at least half the vertices in $U$ are $\delta$-good in this iteration.
	 \end{lemma}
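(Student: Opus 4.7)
The plan is to argue by contradiction. Let $B \subseteq U$ denote the set of vertices in $U$ that are \emph{not} $\delta$-good, and assume toward a contradiction that $|B| > |U|/2$. By unpacking Definition~\ref{def:delta-good} with $\delta = 1/\tau(U) = |U|/(4n\beta(G))$, every $v \in B$ simultaneously satisfies (a) $\deg(v) \geq \tau(U)$ (large total degree) and (b) $\degto{v}{U} < \delta \cdot \degto{v}{V\setminus U} \leq \delta \cdot \deg(v)$ (small internal degree in $U$, and a fortiori in $B \subseteq U$). These two properties are what the proof will play against each other.

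The centerpiece of the argument is to locate a large independent set inside $B$, exactly as suggested by Figure~\ref{fig:mm-good}. I take $I$ to be a maximal independent set of the induced subgraph $G[B]$. Since $I$ is an independent set of $G$ as well, I can apply Claim~\ref{clm:mis-alg-degree} to deduce $\sum_{v \in I} \deg(v) \leq n\cdot\beta(G)$. Combined with the lower bound $\deg(v) \geq \tau(U)$ for each $v \in I \subseteq B$, this immediately yields $|I| \leq n\beta(G)/\tau(U) = |U|/4$.

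To control the remaining portion $B \setminus I$, I use maximality of $I$ inside $G[B]$: every vertex in $B \setminus I$ has a neighbor in $I$, so $|B \setminus I| \leq \sum_{v \in I} \degto{v}{B} \leq \sum_{v \in I} \degto{v}{U}$. Applying property (b) of ``bad'' vertices gives $\sum_{v \in I} \degto{v}{U} < \delta \sum_{v \in I} \deg(v)$, and invoking Claim~\ref{clm:mis-alg-degree} a second time bounds this by $\delta \cdot n\beta(G) = |U|/4$. Adding the two estimates produces $|B| = |I| + |B\setminus I| \leq |U|/4 + |U|/4 = |U|/2$, contradicting the assumption $|B| > |U|/2$ and proving the lemma.

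The main conceptual obstacle is that $B$ itself need not be independent, so the bounded neighborhood independence of $G$ cannot be applied to $B$ directly; one must first extract an independent set $I$ inside $B$ and then use it both as the ``core'' (bounding $|I|$) and as a ``dominating set'' of $B$ (bounding $|B\setminus I|$). The technical sleight of hand is that the threshold $\tau(U) = 4n\beta(G)/|U|$ has been calibrated so that the two bounds derived from Claim~\ref{clm:mis-alg-degree} each come out to exactly $|U|/4$; this is essentially the role of the constant $4$ in the definition of $\tau$.
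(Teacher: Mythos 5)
Your proof is correct, and it takes a genuinely different route from the paper's. The paper extracts an independent set $I \subseteq B$ via a random-permutation (Tur\'an-type) argument engineered so that $I$ has $\Omega(b/\delta)$ edges to $V\setminus U$, and then applies double counting to find a single vertex $v\in V\setminus U$ with more than $\beta(G)$ independent neighbors in $I$ --- a direct contradiction with the definition of neighborhood independence. You instead take $I$ to be a \emph{maximal} independent set of $G[B]$ and reuse Claim~\ref{clm:mis-alg-degree} twice: once with the high-degree property of bad vertices to bound $\card{I}\leq n\beta(G)/\tau(U) = \card{U}/4$, and once with the low-internal-degree property plus domination (every vertex of $B\setminus I$ has a neighbor in $I$) to bound $\card{B\setminus I} < \delta\sum_{v\in I}\deg{v}\leq\delta\cdot n\beta(G)=\card{U}/4$. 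Your argument buys you a cleaner, fully deterministic proof that reuses the lemma already proved for MIS instead of introducing a new probabilistic extraction step, and it exposes the role of the constant $4$ in $\tau$ as splitting the budget $\card{U}/2$ evenly between the two estimates; the paper's version, in exchange, directly exhibits the ``forbidden'' vertex $v$ with $\degto{v}{I}>\beta(G)$, which matches the picture in Figure~\ref{fig:mm-good}. One small stylistic note: you never actually use the contradiction hypothesis $\card{B}>\card{U}/2$ --- your chain of inequalities gives $\card{B}<\card{U}/2$ unconditionally, so the argument can be stated directly.
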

\begin{proof}
	Let us say that a vertex $u \in U$ is \emph{bad} iff it is not $\delta$-good (for the parameter $\delta$ in the lemma statement). Let $B$ denote the set of bad vertices and let $b := \card{B}$.
%	We have the following key claim.
%	The key step in the proof of this lemma is the following claim which is a generalization of Claim~\ref{clm:mis-alg-degree} in the proof of Theorem~\ref{thm:mis-upper}.

	\begin{claim}\label{clm:mm-is-bad}
		There exists an independent set $I \subseteq B$ with at least $\frac{b}{2\delta}$ edges leading to $V \setminus U$.
	\end{claim}
	\begin{proof}
		We prove this claim using a probabilistic argument. Pick a random permutation $\sigma$ of vertices in $B$ and add each vertex $v \in B$ to an initially empty independent set $I = I_\sigma$ iff $v$ appears before all its neighbors in $B$ according to $\sigma$. Clearly, the resulting set $I$ is an independent set inside $B$.
		
		Let $E(I)$ denote the set of edges that connect vertices of $I$ with vertices of $V \setminus U$.
		For any vertex $v \in B$, define a random variable $D_v \in \set{0,\degto{v}{V \setminus U}}$ which takes value equal to the external degree of $v$ iff $v$ is added to $I$. Clearly, $\card{E(I)} = \sum_{v \in V} D_v$.
		We have,
		\begin{align*}
			\Ex\card{E(I)} &= \sum_{v \in B} \expect{D_v} = \sum_{v \in B} \Pr\paren{v \in I} \cdot \degto{v}{V \setminus U} \\
			&= \sum_{v \in B} \frac{1}{\degto{v}{B}+1} \cdot \degto{v}{V \setminus U} \tag{$v$ is only chosen in $I$ iff it is ranked
			 first by $\sigma$ among itself and its $\degto{v}{B}$ neighbors} 
		\end{align*}
		\begin{align*}
			 &\qquad \geq \sum_{v \in B} \frac{1}{\delta \cdot \degto{v}{V \setminus U}+1} \cdot \degto{v}{V \setminus U} \tag{as $B \subseteq U$ and vertices in $B$ are all bad} \\
			 &\qquad \geq \sum_{v \in B} \frac{1}{2\delta \cdot \degto{v}{V \setminus U}} \cdot \degto{v}{V \setminus U} = \frac{b}{2\delta} \tag{as $\deg{v} \geq \frac{1}{\delta}$ and hence $\delta \cdot \degto{v}{V \setminus U} \geq 1$}.
		\end{align*}
		It follows that there exists a permutation $\sigma$ for which the corresponding independent set $I = I_\sigma \subseteq B$ has at least $\frac{b}{2\delta}$ edges leading to $V \setminus U$, finalizing the proof.
	\end{proof}
	
	We next prove Lemma~\ref{lem:mm-good} using an argument akin to Claim~\ref{clm:mis-alg-degree} in the proof of Theorem~\ref{thm:mis-upper}. Let $I$ be the independent set guaranteed by Claim~\ref{clm:mm-is-bad}.
	As at least $\frac{b}{2\delta}$ edges are going from $I$ to $V \setminus U$, a   double counting argument implies that there exists a vertex $v \in V \setminus U$ with degree to $I$ satisfying:
	\begin{align}
		\degto{v}{I} \geq \frac{b}{2\delta \cdot \card{V \setminus U}} \geq \frac{b\cdot\tau(U)}{2n}  = \frac{2b \cdot \beta(G)}{\card{U}}, \label{eq:mm-1}
	\end{align}
	where the  equality is by the choice of $\tau(U)$. Suppose towards a contradiction that $b > \card{U}/2$. This combined with Eq~(\ref{eq:mm-1}) implies that $\degto{v}{I} > \beta(G)$.
	Since $I$ is an independent set, it follows that $N(v)$ contains an independent set of size larger than $\beta(G)$, a contradiction. Hence $b \leq \card{U}/2$, and so at least half the vertices in $U$ are $\delta$-good, as required.
\end{proof}
	
We now use Lemma~\ref{lem:mm-good} to bound the expected number of iterations $t$ in Algorithm~\ref{alg:mm}. Let us define the random variables $X_1,\ldots,X_n$, where
$X_k$ denotes the number of iterations spent by the algorithm when $\card{U} = k$. Clearly, the total number of iterations $t = \sum_{k=1}^{n} X_k$.
We use these random variables to bound the expected value of $t$ in the following claim. The next claim then proves a concentration bound for $t$ to obtain the  high probability result.

\begin{claim}\label{clm:mm-t-expected}
	The number of iterations in Algorithm~\ref{alg:mm} is in expectation $\expect{t} \leq 16\beta(G) \cdot n\log{n}$.
\end{claim}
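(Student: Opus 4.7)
The plan is to decompose $t = \sum_{k=1}^{n} X_k$ (as already suggested in the setup of the claim) and bound each $\Exp[X_k]$ separately via a geometric-like tail argument that exploits Lemma~\ref{lem:mm-good}. The key observation is that while $|U|=k$, the parameter $\delta=\delta(U) = |U|/(4n\beta(G)) = k/(4n\beta(G))$ is fixed, and so is the threshold $\tau = 1/\delta$; hence the per-iteration success probability depends only on $k$ (not on which specific $U$ of size $k$ we are looking at, in terms of the lower bound).

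First I would establish a uniform lower bound of $k/(16n\beta(G))$ on the probability that a single iteration with $|U|=k$ \emph{succeeds} (removes at least one vertex from $U$). By Lemma~\ref{lem:mm-good}, the sampled vertex $u$ is $\delta$-good with probability at least $1/2$, so it suffices to show that conditioned on $u$ being $\delta$-good, the iteration succeeds with probability at least $\delta/2$. I would split this into the two branches of Algorithm~\ref{alg:mm}: if $\deg(u) < \tau = 1/\delta$ then the algorithm enters the first branch and $u$ is unconditionally removed from $U$ (success probability $1$); if $\deg(u) \geq \tau$, then $u$ is $\delta$-good via the first clause of Definition~\ref{def:delta-good}, i.e., $\degto{u}{U} \geq \delta \cdot \degto{u}{V\setminus U}$, and a uniformly random $v \in N(u)$ lies in $U$ with probability
\[
\frac{\degto{u}{U}}{\deg(u)} \;=\; \frac{\degto{u}{U}}{\degto{u}{U}+\degto{u}{V\setminus U}} \;\geq\; \frac{\delta}{1+\delta} \;\geq\; \frac{\delta}{2},
\]
since $\delta \leq 1/4$ (as $k\leq n$ and $\beta(G)\geq 1$). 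Combining with the factor of $1/2$ from Lemma~\ref{lem:mm-good}, each iteration with $|U|=k$ succeeds with probability at least $\delta/4 = k/(16n\beta(G))$.

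Next I would translate this into a bound on $\Exp[X_k]$. Conditioned on $|U|$ ever reaching exactly $k$, the random variable $X_k$ is stochastically dominated by a geometric random variable with success parameter $k/(16n\beta(G))$, because once an iteration succeeds the cardinality of $U$ strictly decreases (to $k-1$ or $k-2$) and never returns to $k$. Hence $\Exp[X_k] \leq 16n\beta(G)/k$ irrespective of whether $|U|$ ever reaches $k$ (as $X_k = 0$ otherwise). Summing yields
\[
\Exp[t] \;=\; \sum_{k=1}^{n}\Exp[X_k] \;\leq\; 16\,n\,\beta(G)\sum_{k=1}^{n}\frac{1}{k} \;\leq\; 16\,\beta(G)\cdot n\log n,
\]
using the standard harmonic sum bound, which matches the claim.

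The main obstacle is really the case analysis for the success probability: one must be careful that the bound $\delta/2$ holds in both branches and that the $\delta$-good property combined with the branch condition $\deg(u) \geq \tau = 1/\delta$ is what forces $\degto{u}{U} \geq \delta \cdot \degto{u}{V\setminus U}$ (rather than the alternative clause of Definition~\ref{def:delta-good}). Everything else is a short geometric-sum calculation; no randomness from earlier iterations needs to be controlled beyond noting that the lower bound on the per-iteration success probability depends only on the \emph{current} value of $|U|$, so the geometric domination is valid even conditionally on the full history.
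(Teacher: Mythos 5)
Your proposal follows essentially the same approach as the paper's proof: decompose $t=\sum_{k}X_k$, use Lemma~\ref{lem:mm-good} to show each iteration with $\card{U}=k$ succeeds with probability at least $\delta(k)/4$ (splitting on the two branches of Algorithm~\ref{alg:mm}), then bound $\expect{X_k}\le 4/\delta(k)$ by geometric-type domination and sum the harmonic series. Your explicit observation that $\deg{u}\ge\tau=1/\delta$ forces the degree clause of Definition~\ref{def:delta-good} to be the one that holds is a slightly cleaner rendering of the paper's "either\ldots or" step, but the argument is the same.
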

\begin{proof}
	As stated above, $\expect{t} = \sum_{k=1}^{n} \expect{X_k}$ and hence it suffices to bound each $\expect{X_k}$.
	Fix some $k \in [n]$ and consider the case when $\card{U} = k$. Recall the function $\delta(U)$ in Lemma~\ref{lem:mm-good}. As $\delta(U)$ is only a function of size of $k$,
	we slightly abuse the notation and write $\delta(k)$ instead of $\delta(U)$ where $k$ is the size of $U$.
	
	By Lemma~\ref{lem:mm-good},
	at least half the vertices in $U$ are $\delta(k)$-good.
	Hence, in each iteration, with probability at least half, we sample a $\delta(k)$-good vertex $u$ from $U$.
	Conditioned on this event, either $\deg{u} < 1/\delta(k)$ which means this iteration succeeds with probability $1$ or  $\degto{u}{U} \geq \delta(k) \cdot \degto{u}{V\setminus U}$, and hence
	with probability at least $\delta(k)/(1+\delta(k)) \geq \delta(k)/2$, the sampled vertex $v$ belongs to $U$ and again this iteration succeeds. As a result, as long as $U$ has not changed, each iteration has probability at least $\delta(k)/4$ to succeed.
	
	By the above argument, $X_k$ statistically dominates a Poisson distribution with parameter $\delta(k)/4$ and hence $\expect{X_k} \leq 4/\delta(k)$.
	To conclude,
	 \begin{align*}
	 	\expect{t} = \sum_{k=1}^{n} \expect{X_k} \leq \sum_{k=1}^{n} 4/\delta(k) = 16\beta(G) \cdot n \cdot  \sum_{k=1}^{n} \frac{1}{k} \leq 16\beta(G) \cdot n\log{n}.
	 \end{align*}
	 which finalizes the proof.
\end{proof}

Claim~\ref{clm:mm-t-expected} combined with Claim~\ref{clm:mm-runtime} is already enough to prove the expected runtime bound in Theorem~\ref{thm:mm-upper}.
We now prove a concentration bound for $t$ to obtain the high probability bound. We note that it seems possible to prove the following claim by using 
standard concentration inequalities; however doing so requires taking care of several boundary cases for the case when $\card{U}$ become $o(\log{n})$, and hence 
we instead prefer to use the following direct and more transparent proof. 

%%We shall note that this bound, while not hard to prove, does not seem to follow directly from standard concentration bounds such as Chernoff-Hoeffding or Azuma's inequality (but can be obtained trivially
%%if we allow for an additional factor of $O(\log{n})$ which we are saving here).
%% \shay{similar changes in proof of next claim, re dividing or multiplying by $1+\delta(k)$; maybe we should define another parameter say $\zeta(k)$, which is just $\delta(k)/(1+\delta(k))$, and then work with $\zeta(k)$?}
\begin{claim}\label{clm:mm-t-whp}
	The number of iterations in Algorithm~\ref{alg:mm} is $t = O(\beta(G) \cdot n\log{n})$ with high probability.
\end{claim}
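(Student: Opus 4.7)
My plan is to bound $t$ by partitioning iterations into $O(\log n)$ buckets indexed by $|U|$, stochastically dominating the iteration count in each bucket by a negative-binomial-type variable, and then applying Chernoff bounds with a union bound. The boundary issue for $|U|=o(\log n)$ flagged in the remark preceding the claim will be handled by merging all small-$|U|$ iterations into one aggregate Chernoff application rather than one per bucket.

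Set things up as follows. For $i=1,\ldots,\lceil \log_2 n \rceil$, let $B_i$ denote the set of iterations whose start satisfies $|U|\in (2^{i-1},2^i]$, and let $T_i=|B_i|$. Two facts from the proof of Claim~\ref{clm:mm-t-expected} carry over directly: (i) within $B_i$, each iteration succeeds (conditional on the history) with probability at least $q_i:=2^{i-1}/(16n\beta(G))$; and (ii) since every success removes at least one vertex from $U$ while $|U|$ can drop by at most $2^{i-1}+1$ before leaving the bucket, $B_i$ contains at most $2^i$ successful iterations. A standard coupling then shows that $T_i$ is stochastically dominated by $W_i$, the number of i.i.d.\ $\textnormal{Bernoulli}(q_i)$ trials required to collect $2^i$ successes. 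I will call a bucket \emph{large} if $2^{i-1}\geq c\log n$ for a sufficiently large constant $c$, and \emph{small} otherwise. For a large bucket, a Chernoff lower tail on $\textnormal{Binomial}(64n\beta(G),q_i)$, whose mean is $2\cdot 2^i$, yields $\Pr(T_i>64n\beta(G))\leq \exp(-\Omega(2^i))\leq n^{-3}$; union-bounding over the $O(\log n)$ large buckets then bounds their total contribution by $O(n\beta(G)\log n)$ with high probability.

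For the small buckets, to sidestep the weak per-bucket concentration, I will bound their aggregate contribution $T^\star$ in one shot. Throughout the small-$|U|$ regime the per-iteration success probability is at least $1/(16n\beta(G))$, and the total number of successes across these buckets is at most $c\log n$. Hence $T^\star$ is dominated by the number of $\textnormal{Bernoulli}(1/(16n\beta(G)))$ trials needed for $c\log n$ successes, and a single Chernoff lower tail with $M^\star:=Cn\beta(G)\log n$ trials (whose expected number of successes is $\geq 2c\log n$ for $C$ large enough) gives $\Pr(T^\star>M^\star)\leq n^{-3}$. Adding the contributions of the large buckets and $T^\star$ then yields $t=O(n\log n\cdot \beta(G))$ with high probability. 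The main obstacle, as anticipated in the paper's remark, is exactly that per-bucket Chernoff in the $|U|=o(\log n)$ regime produces too few successes for any useful concentration, and collapsing all those buckets into a single worst-case Chernoff bound is the workaround.
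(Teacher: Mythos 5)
Your argument is correct, but it takes a genuinely different route from the paper's. You decompose the $t$ iterations into $O(\log n)$ dyadic buckets by $|U|$, dominate the iteration count in each bucket by a negative binomial (the waiting time for at most $2^i$ successes at rate $q_i = 2^{i-1}/(16n\beta(G))$), and then apply a Chernoff lower tail per bucket plus a union bound, patching the $|U| = o(\log n)$ regime by one aggregate Chernoff bound over all small buckets at once. The paper instead works directly with the per-$k$ variables $X_k$: it dominates each $X_k$ by an exponential $Y_k$ with mean $\mu_k = 4/\delta(k)$, and runs a single moment-generating-function calculation on $Y = \sum_k Y_k$ with a uniform choice of MGF parameter $s^* = 1/(2\mu_1)$, obtaining $\Pr(Y \geq 4\mu) \leq \exp(-\mu/\mu_1) = 1/n$ in one shot and without any bucketing. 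The paper's route avoids the boundary-case issue entirely (which is why they adopt it, as flagged in the remark), at the cost of being a somewhat ad hoc MGF computation; your route is more modular and uses only textbook Chernoff bounds, but requires the extra step of aggregating the small buckets to sidestep the weak concentration when the number of required successes drops below $\Theta(\log n)$. Both are valid; yours is closer to what a first attempt "by standard tools" would look like, and it shows the workaround the paper alludes to is indeed available.

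One small bookkeeping remark: when you invoke stochastic domination of $T_i$ by the negative binomial, it is worth stating explicitly that the lower bound $q_i$ on the conditional success probability holds uniformly given any history in which $|U| \in (2^{i-1}, 2^i]$, so the standard coupling of an adapted process against an i.i.d.\ Bernoulli sequence applies; this is exactly the coupling the paper's Claim~\ref{clm:mm-t-expected} establishes, and you are right that it carries over. Similarly, for the small-bucket aggregate, the domination by a negative binomial with $O(\log n)$ successes at rate $1/(16n\beta(G))$ uses that once $|U|$ first drops to at most $c\log n$ there can be at most $c\log n$ further successful iterations; this is a valid observation and closes the gap cleanly.
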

\begin{proof}
	Recall from the proof of Claim~\ref{clm:mm-t-expected} that $t = \sum_{k=1}^{n} X_k$ and that each $X_k$ statistically dominates a Poisson distribution with parameter $\delta(k)/4$ (as defined in Claim~\ref{clm:mm-t-expected}).
	Define $Y_1,\ldots,Y_n$
	as independent random variables where $Y_k$ is distributed according to exponential distribution with mean $\mu_k := \frac{4}{\delta(k)}$.
	For any $x \in \IR^+$,
	\[
		\Pr\paren{X_k \geq x} \leq \paren{\frac{\delta(k)}{4}}^{x} = \Pr\paren{Y_k \geq x}.
	\]
	
	As such, the random variable $Y := \sum_{k=1}^{n} Y_k$ statistically dominates the random variable $t$ for number of iterations. Moreover by Claim~\ref{clm:mm-t-expected},
	$\mu := \expect{Y} = 16\beta(G) \cdot n\log{n}$ (the equality for $Y$ follows directly from the proof).
	
	In the following, we prove that with high probability $Y$ does not deviate from its expectation by much. The proof follows the standard moment generating function idea (used for instance in the proof of Chernoff-Hoeffding bound).
	Let $y \in \IR^+$. For any $s > 0$,
	\begin{align}
		\Pr\paren{Y \geq y} = \Pr\paren{e^{sY} \geq e^{s y}} \leq \frac{\expect{e^{sY}}}{e^{s y}},  \label{eq:mm-markov}
	\end{align}
	where the inequality is simply by Markov bound. Additionally, since $Y = \sum_{k=1}^{n} Y_k$ and and $Y_k$'s are independent, we have for any $s > 0$,
	\begin{align}
		\frac{\expect{e^{sY}}}{e^{s y}} = \frac{\expect{e^{s \cdot \sum_{k=1}^{n} Y_k}}}{e^{s y}} = \frac{\prod_{k=1}^{n} \expect{e^{s \cdot Y_k}}}{e^{s y}}. \label{eq:mm-prod}
	\end{align}
	Recall that for every $i \in [n]$, $\expect{Y_k} = \mu_k$ and $Y_k$ is distributed according to exponential distribution. Thus, for any $s < 1/\mu_k$,
	\begin{align}
		\expect{e^{sY_k}} = \int_{y=0}^{\infty} e^{s y} \cdot \Pr\paren{Y_k = y} dy = \frac{1}{\mu_k} \int_{y=0}^{\infty} e^{s y} \cdot e^{-y/\mu_k} dy = \frac{1}{1-s \cdot \mu_k}. \label{eq:mm-int}
	\end{align}
	 Recall that $\mu_k = {4}/{\delta(k)}$ for every $k \in [n]$ and $\delta(1) < \delta(2) < \ldots < \delta(n)$ by definition.
	 Pick $s^* = {1}/{2\mu_1}$ and so $s^* < 1/\mu_k$ for all $k \in [n]$. By plugging in the bounds in Eq~(\ref{eq:mm-int}) for $s=s^*$ into Eq~(\ref{eq:mm-prod}),
	 we have,
	 \begin{align*}
	 	\frac{\expect{e^{s^*Y}}}{e^{s^* y}} &= \frac{\prod_{k=1}^{n} \expect{e^{s^* \cdot Y_k}}}{e^{s^* y}} = \prod_{k=1}^{n}\frac{e^{-s^* y}}{1-s^*\mu_k} \leq e^{-s^* y} \cdot \exp\Paren{2\sum_{k=1}^{n} s^*\mu_k} \tag{as $1-x \geq e^{-2x}$ for $x \in (0,1/2]$} \\
		&= \exp\Paren{-s^* y + 2 s^*\mu} = \exp\paren{-y/2\mu_1 + \mu/\mu_1}.  \\
	 \end{align*}
	 We now plug in this bound into Eq~(\ref{eq:mm-markov}) with the choice of $y = 4\mu$ to obtain that,
	 \begin{align*}
	 	\Pr\paren{Y \geq 4\mu} \leq \exp\paren{-4\mu/2\mu_1 + \mu/\mu_1} = \exp\paren{-\mu/\mu_1} = \exp{\paren{-\log{n}}} = 1/n,
	 \end{align*}
	 where we used the fact that $\mu/\mu_1 \geq \log{n}$. This means that with high probability, $Y$ is only $4$ times larger than its expectation, finalizing the proof.
\end{proof}

The high probability bound in Theorem~\ref{thm:mm-upper} now follows from Claim~\ref{clm:mm-runtime} and Claim~\ref{clm:mm-t-whp}, concluding the whole proof of this theorem.

\subsection*{Unknown $\beta(G)$}

We next show that our algorithm can be easily adjusted to the case when $\beta(G)$ is unknown. The idea is simply to ``guess'' $\beta(G)$ in powers of two,   starting from $\beta = 2$ and ending at $\beta=n$, and each time to (sequentially) run 
Algorithm~\ref{alg:mm} under the assumption that $\beta(G) = \beta$. For each choice of $\beta$, we shall only run the algorithm for at most $t=O(n\log{n} \cdot \beta)$ iterations (where the constant hiding in the $O$-notation should be sufficiently large, in 
accordance with that in the proof of Claim~\ref{clm:mm-t-whp}) and if at the end of a run the set $U$ in the algorithm has not become empty, we   start a new run from scratch with the next (doubled) value of $\beta$. (For $\beta=n$, we do not terminate the 
algorithm prematurely and instead run it until $U$ is empty.)

By Theorem~\ref{thm:mm-upper}, for the \emph{first} choice of $\beta$ for which $\beta \geq \beta(G)$, the algorithm must terminate with high probability within $O(n\log{n} \cdot \beta(G))$ time (as $\beta \leq 2\beta(G)$ also). 
Moreover, the runtime of every previous run is bounded {\em deterministically} by $O(n\log{n} \cdot \beta)$ (for the corresponding guess $\beta$ of $\beta(G)$).
Consequently, the total runtime is
\begin{align*}
	O(n\log{n}) \cdot \sum_{\set{2^{i} \mid 2^{i} \leq 2\beta(G)}} 2^{i} = O(n\log{n} \cdot \beta(G)),
\end{align*}
where this bound holds with high probability.
In this way we get an algorithm that uses no prior knowledge of $\beta(G)$ and
achieves the same asymptotic performance as Algorithm~\ref{alg:mm}.

\renewcommand{\bM}{\ensuremath{\overline{M}}}

\renewcommand{\bM}{\ensuremath{\overline{M}}}

\newcommand{\Cused}{\ensuremath{C_{\textnormal{\textsf{used}}}}}
\newcommand{\Cfree}{\ensuremath{C_{\textnormal{\textsf{free}}}}}

\section{A Lower Bound for Deterministic Maximal Matching}\label{sec:mm-lower}

We  prove that randomization is necessary to obtain a sublinear time algorithm for MM even on graphs with bounded neighborhood independence.

\begin{theorem}\label{thm:mm-lower}
	Any deterministic algorithm that finds a maximal matching in every given graph $G$ with neighborhood independence $\beta(G) = 2$ (known to the algorithm) presented in the adjacency array representation requires $\Omega(n^2)$ time.
\end{theorem}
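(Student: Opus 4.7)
The plan is to use an adversary argument against a carefully chosen family of graphs. Let $n$ be even. Consider the family $\mathcal{G} = \{G_\sigma\}$ indexed by permutations $\sigma$ of $[n/2]$, where $G_\sigma$ has vertex set $V = X \cup Y$ with $|X| = |Y| = n/2$, both $X$ and $Y$ induce cliques, and the bipartite structure between them is the complete bipartite graph minus the perfect matching $M^{\star}_\sigma := \{(x_i, y_{\sigma(i)})\}$. A direct computation shows $\beta(G_\sigma) = 2$: every neighborhood of a vertex decomposes as a union of two cliques (one inside $X$ and one inside $Y$), hence admits no independent set of size more than $2$. Moreover, since $X$ and $Y$ each induce cliques, every independent set in $G_\sigma$ has size at most $2$, so every maximal matching of $G_\sigma$ saturates at least $n-2$ vertices and therefore has size at least $n/2 - 1$.

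The adversary commits lazily to $\sigma$ while controlling the ordering of each vertex's adjacency list. Specifically, for every $x_i \in X$, the adversary places the $n/2 - 1$ inter-clique (i.e., $Y$-) neighbors of $x_i$ in the first $n/2 - 1$ positions of $x_i$'s list, and the $n/2 - 1$ intra-clique (i.e., $X$-) neighbors in the remaining positions; the ordering is analogous for every $y_j \in Y$. This arrangement alone forces the confirmation of any intra-clique matching edge $(x_i, x_j)$ to cost $\Omega(n)$ queries, since the algorithm must traverse $\Omega(n)$ ``decoy'' cross-clique entries in one of the endpoints' lists. Within the $Y$-prefix of each $x_i$'s list, the adversary orders the $y$'s in a canonical ``cost-matrix'' fashion that assigns $y_j$ to position roughly $j$, and analogously on the $Y$-side with $x_i$ at position roughly $i$. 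A rearrangement-style argument then shows that any inter-clique partial matching of size $k$ output by the algorithm incurs total confirmation cost $\Omega(k^2)$: for every partial assignment $\pi$ from $X$ to $Y$, the corresponding sum of adjacency-list positions is lower-bounded by $\sum_{\ell=1}^{k} \ell = \Theta(k^2)$.

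Combining the two cases, an algorithm whose output MM uses $k$ inter-clique edges and $n/2 - 1 - k$ intra-clique edges pays at least $\Omega(n \cdot (n/2 - 1 - k) + k^2)$ queries in total; minimizing this over $k \in [0, n/2 - 1]$ yields $\Omega(n^2)$ queries. The main obstacle will be formalizing the adversary's lazy-commitment strategy so that it simultaneously (i) realizes the desired adjacency-list orderings, (ii) remains consistent with some permutation $\sigma$ at every step of the interaction, and (iii) forces the stated per-edge cost bounds uniformly over every possible query pattern, including ones that interleave queries across $X$- and $Y$-side lists or exploit information gleaned from a fully scanned vertex's list to shortcut queries on other lists. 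The freedom to choose $\sigma$ only at the end of the interaction, together with the structural observation that every MM in $G_\sigma$ leaves at most two vertices unmatched (one per clique, constrained by the missing edge of $H_\sigma$) so that the algorithm cannot shrink its output to save queries, is the key to carrying this formalization through.
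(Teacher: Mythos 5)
Your construction has a fatal flaw: the family $\{G_\sigma\}$ is not actually hard. Since $X$ and $Y$ each induce full cliques (with no uncertainty about the intra-clique edges), an algorithm can compute a maximal matching using \emph{only} intra-clique edges, essentially for free. Concretely, when $n/2$ is even, output $\{(x_1,x_2),(x_3,x_4),\ldots\}\cup\{(y_1,y_2),(y_3,y_4),\ldots\}$; this saturates every vertex and is therefore maximal, and it is correct for every $G_\sigma$ simultaneously — zero queries suffice. When $n/2$ is odd, scan one adjacency list in $O(n)$ time to learn the missing $Y$-neighbor of a single $x$, match that $x$ across, and then pair off the remaining $n/2-1$ vertices in each clique internally. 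Either way, $O(n)$ total work refutes the claimed $\Omega(n^2)$ bound. Your argument implicitly assumes the algorithm must ``confirm'' intra-clique edges by scanning past $\Omega(n)$ decoy entries, but no confirmation is needed for edges guaranteed to exist in every consistent graph. Relatedly, because the adjacency-list ordering you describe is fixed and publicly known (each $x_i$'s $Y$-prefix lists the $y_j$'s by index, with one gap), the algorithm can recover $\sigma(i)$ by binary search in $O(\log n)$ queries per vertex, so even the inter-clique ``cost-matrix'' part does not force $\Omega(k^2)$ cost.

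The paper avoids both problems by choosing a family in which the algorithm provably \emph{cannot} sidestep the unknown part of the graph. Its hard instances are $K_n$ minus a single (unknown) perfect matching $\overline{M}(G)$ on $n=10k$ vertices. The adversary designates $2k$ dummy vertices $D$ with a pre-committed non-edge matching $\overline{M}_D$, and keeps the remaining $8k$ core vertices in a pool $C_{\text{free}}$. It answers queries on a $C_{\text{free}}$ vertex with dummy vertices for the first $2k$ probes, and only after the $(2k+1)$-th probe does it commit that vertex's non-edge partner (moving the pair into $C_{\text{used}}$). If the algorithm makes fewer than $2k^2$ queries, then $|C_{\text{used}}|\le 2k$ and $|C_{\text{free}}|\ge 6k$, while at most $4k$ vertices of $C_{\text{free}}$ can be matched outside $C_{\text{free}}$ (to $D\cup C_{\text{used}}$). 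Since at most two vertices can be unmatched in any maximal matching of a clique minus a perfect matching, the output matching \emph{must} contain an edge with both endpoints in $C_{\text{free}}$ — and the adversary, having committed to nothing inside $C_{\text{free}}$, can retroactively declare that edge a non-edge. This counting step — the algorithm is forced to match inside the uncommitted region because the committed region is too small to absorb $C_{\text{free}}$ — is exactly the structural property your two-clique family lacks, since in your construction every vertex can be matched inside its own fully known clique.
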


%%The proof of this theorem is based on setting up an ``evasive'' game between any given deterministic algorithm for MM and an adversary that answers the probes of the algorithm to the adjacency array of the underlying graph. 
%%We show that there exists a family of graphs with neighborhood independence number $2$ (in particular cliques minus a perfect matching) and a strategy for the adversary to answer the probes of the algorithm in such a way that $(a)$ the answers 
%%are consistent with at least one graph in this family and $(b)$ unless the algorithm makes $\Omega(n^2)$ queries to the adversary, the resulting matching it outputs cannot be maximal. This then gives us the proof of Theorem~\ref{thm:mm-lower}.
%%Due to space limitations, the formal proof is postponed to Appendix~\ref{app:mm-lower}. 
%%

For every integer $n=10k$ for $k \in \IN$, we define $\FG_n$ as the family of all graphs obtained by removing a perfect matching from a clique $K_n$ on $n$ vertices.
For a graph $G$ in $\FG_n$ we refer to the removed perfect matching of size $5k$ as the \emph{non-edge matching} of $G$ and denote it by $\bM(G)$. Clearly, every graph
in $\FG_n$ has neighborhood independence $\beta(G) = 2$. Moreover, any MM in $G$ can have at most $2$ unmatched vertices.

Let $\alg$ be a deterministic algorithm for computing an MM on every graph in $\FG_n$. We prove Theorem~\ref{thm:mm-lower} by analyzing a game between $\alg$ and an adaptive adversary that
answers the probes of $\alg$ to the adjacency array of the graph. In particular, whenever $\alg$ probes a \emph{new} entry of the adjacency array for some vertex $v \in V$, we can think of $\alg$ making a \emph{query} $Q(v)$ to
the adversary, and the adversary outputs a vertex $v$ that had not been so far revealed in the neighborhood of $u$ (as degree of all vertices in $G$ is exactly $n-2$, $\alg$ knows the degree of all vertices and thus does not need to make 
any degree queries at all). 

We now show that there is a strategy for the adversary to answer the queries of $\alg$ in a way that ensures $\alg$ needs to make $\Omega(n^2)$ queries before it can output an MM of the graph.

\begin{figure}[t!]
\centering
\begin{subfigure}{.45\textwidth}
\begin{tikzpicture}
	\node[ellipse, draw, minimum height=2cm, minimum width=1cm, label=$D$, pattern=north east lines] (D) {};
	\node[ellipse, draw, minimum height=3cm, minimum width=1.5cm, label={[xshift=1.5cm, yshift=-1.75cm]$\Cfree$}] (Cfree) [above right=-0.4cm and 2cm of D]{$?$};
	\node[ellipse, draw, minimum height=2cm, minimum width=1cm, label={[xshift=1.5cm, yshift=-1.25cm]$\Cused$}, pattern=north east lines] (Cused) [below=0.5cm of Cfree]{};
	
	 \tikzset{decoration={snake,amplitude=.4mm,segment length=2mm,
                       post length=0mm,pre length=0mm}}
                       
	\draw[black, decorate, line width=1pt]
		(D.north east) -- (Cfree.west)
		(D.south east) -- (Cused.west)
		(Cused.north) -- (Cfree.south);

\end{tikzpicture}
\caption{Algorithm's ``knowledge''  after all queries.}
\end{subfigure}
~
\begin{subfigure}{.45\textwidth}
\begin{tikzpicture}

	\node[ellipse, draw, minimum height=2cm, minimum width=1cm, label=$D$, pattern=north east lines] (D) {};
	\node[ellipse, draw, minimum height=3cm, minimum width=1.5cm, label={[xshift=1.5cm, yshift=-1.75cm]$\Cfree$}] (Cfree) [above right=-0.4cm and 2cm of D]{};
	\node[ellipse, draw, minimum height=2cm, minimum width=1cm, label={[xshift=1.5cm, yshift=-1.25cm]$\Cused$}, pattern=north east lines] (Cused) [below=0.5cm of Cfree]{};
	
	 \tikzset{decoration={snake,amplitude=.4mm,segment length=2mm,
                       post length=0mm,pre length=0mm}}
                       
%%	\draw[black, decorate, line width=1pt]
%%	%	(D.north east) -- (Cfree.north west)
%%		(D.east) -- (Cfree.west)
%%		(Cused.north) -- (Cfree.south);
		
%%	\node[ellipse, draw, minimum height=3cm, minimum width=1.5cm, label=$D$] (D) {};
%%	\node[ellipse, draw, minimum height=5cm, minimum width=2.5cm, label=$C$] (C) [right=2cm of D]{};
%%	
%%	\node (Cfree) at ($(C)+(0cm,0.75cm)$) {$\Cfree$};
%%	\node (Cused) at ($(C)+(0cm,-1.5cm)$) {$\Cused$};
%%	
%%	\draw[black]
%%		($(C)+(-1.2cm,-0.75cm)$) -- ($(C)+(1.2cm,-0.75cm)$); 
%%		
	\draw[line width=0.5pt]
		($(D) + (0.2cm,-0.15cm)$) -- ($(Cfree)+(-0.5cm,0.25cm)$)
		($(D) + (0.2cm,0.35cm)$) -- ($(Cfree)+(-0.5cm,0.75cm)$)
		($(Cused) + (-0.2cm,0.35cm)$) -- ($(Cfree) + (-0.2cm,-0.85cm)$)
		($(Cused) + (0.2cm,0.35cm)$) -- ($(Cfree) + (0.2cm,-0.85cm)$);
	
	\draw[line width=0.5pt]
		($(Cfree) + (-0.45cm,-0.2cm)$) -- ($(Cfree) + (0.45cm,-0.2cm)$) node[above left=-1pt and 6pt]{$\mathbf{?}$};	
		
%%	
%%	\draw[blue]
%%		($(C) + (-0.75cm,-1.25cm)$) -- ($(C)+(-0.75cm,-0.15cm)$)
%%		($(C) + (-0.25,-1.25cm)$) -- ($(C)+(-0.25,-0.15cm)$)
%%		($(C) + (0.25cm,-1.25cm)$) -- ($(C)+(0.25cm,-0.15cm)$)
%%		($(C) + (0.75cm,-1.25cm)$) -- ($(C)+(0.75cm,-0.15cm)$);
%%		
%%	\draw[red ,line width=1pt]
%%		($(C) + (-0.25cm,1.75cm)$) -- ($(C) + (0.75cm,1.75cm)$)
%%		($(C) + (-0.25cm,1.25cm)$) -- ($(C) + (0.75cm,1.25cm)$) node[above left=-1pt and 6pt]{$\mathbf{?}$};
%%		
			
%%	\node[ellipse, draw, minimum height=3cm, minimum width=1.5cm, label=$B$, pattern=north east lines] (B) at ($(U)+(0pt,0pt)$){};
%%	\node[ellipse, draw, minimum height=1.5cm, minimum width=0.75cm, label=$I$, fill=white] (I) at ($(B)+(0pt,0pt)$){};
%%	\node[circle, draw, minimum height=0.05cm, minimum width=0.05cm, fill=gray, label=$v$] (v) at ($(V)+(0,1.5cm)$){};
%%	
%%	\draw[black, dashed, line width=1pt]
%%		(B.north east) -- (V.north west)
%%		(B.east) -- (V.west)
%%		(B.south east) -- (V.south west);
%%		
%%	\draw[black]
%%		(I.north east) -- (v)
%%		(I.east) -- (v)
%%		(I.south east) -- (v);
		
\end{tikzpicture}
\caption{The output maximal matching.}
\end{subfigure}
\caption{An illustration of proof of Theorem~\ref{thm:mm-lower}: After making ``small'' number of queries, the algorithm can know all edges between dummy vertices and vertices in $\Cused$, but no edge inside $\Cfree$ has been discovered. On the other hand, the algorithm should also output a matching that contains some edges with both endpoints inside $\Cfree$ which leads to a contradiction.}
\label{fig:mm-lower}
\end{figure}
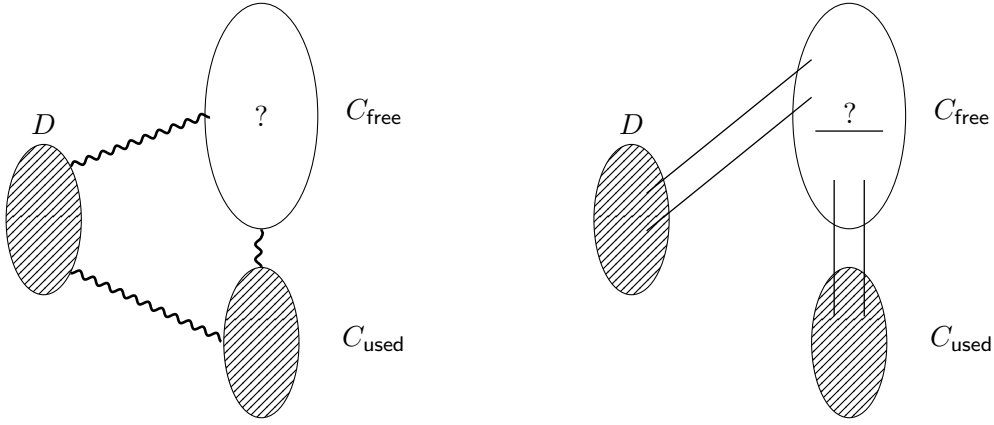

\paragraph{Adversary's Strategy:} The adversary picks an arbitrary set of $2k$ vertices $D$ referred to as \emph{dummy} vertices. We refer to remaining vertices as \emph{core} vertices and denote them by $C:= V \setminus D$.
The adversary also fixes a non-edge matching of size $k$ between vertices in $D$, denoted by $\bM_D$.
The non-edge matching of $G$ consists of $\bM_D$ and a non-edge matching of size $4k$ between vertices in $C$, denoted by $\bM_C$, which unlike $\bM_D$ is constructed adaptively by the adversary.
We assume $\alg$ knows the partitioning of $V$ into $D$ and $C$, as well as the non-edge matching $\bM_D$. Hence, the only missing information to $\alg$ is identity of $\bM_C$.

To answer a query $Q(u)$ for a dummy vertex $u \in D$, the adversary simply returns any arbitrary vertex in $V$ (not returned so far as an answer to $Q(u)$) except for the mate of $u$ in $\bM_D$, which cannot be a neighbor of $u$.
To answer the queries $Q(w)$ for core vertices $w \in C$, the adversary maintains a partitioning of $C$ into $\Cused$ and $\Cfree$. Initially all core vertices belong to $\Cfree$ and $\Cused$ is empty.
Throughout we only move vertices from $\Cfree$ to $\Cused$. The adversary also maintains a counter for every vertex in $\Cfree$ on how many times that vertex has been queried so far.
Whenever a vertex $w \in \Cfree$ is queried, as long as this vertex has been queried at most $2k$ times, the adversary returns an arbitrary dummy vertex $u$ from $D$ as the answer to $Q(w)$ (which is possible because of size of $D$ is $2k$).
Once a vertex $w \in \Cfree$ is queried for its $(2k+1)$-th time, we pick another vertex $w'$ from $\Cfree$ also, add the pair $(w,w')$ to the non-edge matching $\bM_C$ and move both $w$ and $w'$ to $\Cused$,
and then answer $Q(w)$ for the case $w \in \Cused$ as described below.

Recall that for any vertex $w \in \Cused$, by construction, there is another fixed vertex $w'$ in $\Cused$ (joined at the same time with $w$) where $(w,w') \in \bM_C$. For any query for $w \in \Cused$, the adversary answers $Q(w)$ by returning an arbitrary vertex
from $C \setminus \set{w'}$. This concludes the description of the strategy of the adversary.

\medskip
We have the following basic claim regarding the correctness of the adversary's strategy. %The proof is straightforward.

\begin{claim}\label{clm:adversary-correct}
	The answers returned by the adversary for any sequence of queries are always consistent with at least one graph $G$ in $\FG_n$.
\end{claim}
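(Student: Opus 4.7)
The plan is to show, given any state of the game, how to exhibit a specific graph $G \in \FG_n$ whose edges include every answer returned so far. Concretely, I will take the current partial non-edge matching $\bM_D \cup \bM_C$, where $\bM_C$ consists of the pairs that have been committed to so far (these pair up all vertices of $\Cused$), and extend it by arbitrarily pairing the vertices of $\Cfree$ into a perfect matching of $C$, call it $\bM^*_C$. This extension is well defined because $|\Cfree|$ remains even throughout the execution: it starts at $|C|=8k$ and decreases by exactly two whenever a pair is added to $\bM_C$. The candidate graph is $G := K_n \setminus (\bM_D \cup \bM^*_C)$, which by construction lies in $\FG_n$.

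The bulk of the work is to verify that every answer $v'$ returned for some query $Q(v)$ is indeed a neighbor of $v$ in $G$, i.e., $(v,v') \notin \bM_D \cup \bM^*_C$. I will split into the three cases of the adversary's strategy. First, for $v \in D$, the answer was chosen to avoid the mate of $v$ in $\bM_D$, and since $\bM^*_C \subseteq C \times C$ contains no pair incident to $D$, the pair $(v,v')$ is an edge of $G$. Second, for $v \in C$ answered while $v \in \Cfree$, the answer $v'$ lies in $D$; since $\bM_D \subseteq D \times D$ and $\bM^*_C \subseteq C \times C$, no cross-edge between $C$ and $D$ is missing in $G$. Third, for $v \in \Cused$ at the time of the query, the answer was chosen from $C \setminus \set{w'}$ where $w'$ is the mate of $v$ in $\bM_C$, which is also its mate in $\bM^*_C$, so $(v,v')$ is again an edge of $G$.

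Along the way I will also verify the basic well-definedness of the strategy: for every vertex $v$ the set of valid answers has size exactly $n-2$, matching the vertex degree in $G$ (for a core vertex this decomposes as $2k$ dummy answers while $v \in \Cfree$ plus $|C|-2 = 8k-2$ answers from $C \setminus \set{v,w'}$ once $v \in \Cused$), and the invariant $|\Cfree| \geq 2$ holds whenever a new pair must be committed, so the adversary never fails to pick a partner $w'$. I expect the main ``obstacle'' to be bookkeeping rather than depth: the key conceptual observation is simply that $\bM_D$, the current $\bM_C$, and any completion inside $\Cfree$ live in the mutually disjoint regions $D \times D$, $\Cused \times \Cused$, and $\Cfree \times \Cfree$, which is precisely what lets us postpone the choice of the final non-edge matching until after all queries have been answered.
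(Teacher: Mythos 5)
Your proposal is correct and takes essentially the same route as the paper: the paper appends additional queries to force the adversary to commit to a full $\bM_C$, while you complete $\bM_C$ to $\bM^*_C$ yourself and verify consistency case-by-case — two phrasings of the identical observation that the answers so far constrain nothing inside $\Cfree \times \Cfree$. Your version is somewhat more explicit (it spells out the disjointness of $D\times D$, $\Cused\times\Cused$, and $\Cfree\times\Cfree$, and checks the parity/degree bookkeeping that the paper leaves implicit), but there is no substantive difference in the argument.
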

\begin{proof}
	We can append to the current sequence of queries a sufficiently long sequence that ensures all vertices are queried  $n-2$ times. Thus,
	the adversary would eventually construct the whole non-edge matching $\bM_C$ also. There exists a unique graph $G$ in $\FG_n$ where $\bM(G) = \bM_D \cup \bM_C$, hence proving the claim (note that before appending the sequence,
	there may be multiple graphs consistent with the original sequence).
\end{proof}

We now prove the following lemma, which is the key step in the proof of Theorem~\ref{thm:mm-lower}.
\begin{lemma}\label{lem:mm-contradiction}
	Suppose $\alg$ makes at most $2k^2$ queries to the adversary and outputs a matching $M$ using only these queries. Then, there exists some graph $G$ in $\FG_n$ where $G$ is consistent with the answers returned by the adversary to $\alg$
	and $\bM(G) \cap M \neq \emptyset$.
\end{lemma}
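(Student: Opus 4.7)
The plan is to show that with a budget of only $2k^2$ queries, the set $\Cfree$ stays so large that $M$ is forced to contain an edge $e$ with both endpoints in $\Cfree$, and that the adversary can then retroactively declare $e$ a non-edge of some $G \in \FG_n$ consistent with its answers to $\alg$.

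First I would bound $|\Cfree|$. A pair of vertices is moved from $\Cfree$ to $\Cused$ only when some $w \in \Cfree$ is queried for the $(2k+1)$-st time, so if $s$ denotes the total number of such transitions then at least $2k+1$ queries have been spent on each triggering vertex, contributing at least $s(2k+1)$ of $\alg$'s queries in total. Hence $s(2k+1) \le 2k^2$, which forces $s \le k-1$ and therefore $|\Cfree| = 8k - 2s \ge 6k + 2$.

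Next I would argue that $M$ contains an edge with both endpoints in $\Cfree$. Since $G$ is $K_n$ with a perfect matching removed, its maximum independent set has size two; consequently, any maximal matching of $G$ leaves at most two vertices unmatched. In particular, at least $|\Cfree| - 2 \ge 6k$ vertices of $\Cfree$ are matched by $M$, and of these at most $|D \cup \Cused| \le 2k + (2k-2) = 4k-2$ can be matched to vertices outside $\Cfree$. Therefore at least $2k+2$ vertices of $\Cfree$ are matched to one another by $M$, producing at least one edge $e = (a,b) \in M$ with $a, b \in \Cfree$.

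Finally I would construct the desired $G$. Throughout the adversary's execution, every query to a vertex in $\Cfree$ is answered by a dummy vertex in $D$, so no pair inside $\Cfree$ has ever been declared adjacent. I would set $\bM(G) := \bM_D \cup \bM_C^{\textnormal{used}} \cup M'$, where $\bM_C^{\textnormal{used}}$ is the set of $s$ pairs already formed inside $\Cused$ and $M'$ is a perfect matching of $\Cfree$ that contains $e$; such an $M'$ exists because $|\Cfree|$ is even and at least $4$, so after setting $(a,b)$ aside the remaining $|\Cfree|-2$ vertices can be paired arbitrarily. Taking $G \in \FG_n$ to be the graph whose non-edge matching is $\bM(G)$, one checks case-by-case on the type of the queried vertex (in $D$, in $\Cfree$, or in $\Cused$ at query time) that no answer in $\alg$'s transcript coincides with a pair of $\bM_D$, $\bM_C^{\textnormal{used}}$, or $M'$; so $G$ is consistent with $\alg$'s transcript and $e \in \bM(G) \cap M$ by construction. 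The main obstacle I anticipate is the book-keeping in the first step: each transition must be charged a disjoint chunk of queries, and counting only queries to the triggering vertex $w$ (and not to the partner $w'$, which may have received up to $2k$ prior queries) is what cleanly yields $s(2k+1) \le 2k^2$ without double-counting.
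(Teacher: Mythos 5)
Your proof is correct and follows essentially the same route as the paper's: bound the size of $\Cfree$ via the query budget, use the fact that any maximal matching in a graph of $\FG_n$ leaves at most two vertices unmatched to force an edge of $M$ with both endpoints in $\Cfree$, and then extend the non-edge matching to include that edge, exploiting that the adversary never revealed any pair inside $\Cfree$ as adjacent. Your bookkeeping is slightly more explicit than the paper's (charging each transition $2k+1$ disjoint queries, and spelling out $\bM(G) = \bM_D \cup \bM_C^{\textnormal{used}} \cup M'$), but the underlying argument is identical.
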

\begin{proof}
	Since $\alg$ makes at most $2k^2$ queries, there can only be $2k$ vertices in $\Cused$ by the time the algorithm finishes its queries (as each pair of vertices in $\Cused$ consume $2k$ queries at least).
	
	Consider the maximal matching $M$. There are at most $2k$ edges of $M$ that are incident on $\Cused$ and $2k$ more edges incident on $D$. This implies that at most $4k$ vertices in $\Cfree$ are matched to vertices
	outside $\Cfree$. As $n=10k$, we have $\card{\Cfree} \geq 6k$, which means that there are at least $2k$ vertices in $\Cfree$ that are not matched by $M$ to vertices \emph{outside} $\Cfree$. 
	
	Now, the maximality of $M$, together with the fact that $G \in \FG_n$, ensures that there is an edge $(u,v)$ in $M$ with both endpoints in $\Cfree$ (in fact, there are $\Omega(k)$ such edges). However, note that the adversary has not 
	committed to the non-edge matching $\bM_C$ inside $\Cfree$ yet, and in particular, can make sure that $(u,v)$ belongs to $\bM_C$. But this is a contradiction with the correctness of the algorithm as it outputs an edge that does not belong to the graph. 
\end{proof}

Theorem~\ref{thm:mm-lower} now follows immediately from Lemma~\ref{lem:mm-contradiction}, which states that unless the algorithm makes $\Omega(n^2)$ queries, there always exists at least one graph in $\FG_n$ for which the output matching of the algorithm 
is not feasible. As graphs in $\FG_n$ all have $\beta(G) = 2$, we obtain the final result.

%%\subsection*{Acknowledgements}

\bibliographystyle{abbrv}
\bibliography{general}

%%\clearpage
%%\appendix

\end{document}